\newtheorem{lemma}{Lemma}
\newtheorem{theorem}{Theorem}
\newtheorem{corollary}{Corollary}
\theoremstyle{definition}
\newtheorem{definition}{Definition}
\theoremstyle{proof}
\crefname{lemma}{Lemma}{Lemmas}
\Crefname{lemma}{Lemma}{Lemmas}
\crefname{theorem}{Theorem}{Theorems}
\Crefname{theorem}{Theorem}{Theorems}
\crefname{corollary}{Corollary}{Corollaries}
\Crefname{corollary}{Corollary}{Corollaries}
\crefname{observation}{Observation}{Observations}
\Crefname{observation}{Observation}{Observations}
\crefname{definition}{Definition}{Definitions}
\Crefname{definition}{Definition}{Definitions}
\crefname{section}{Section}{Sections}
\Crefname{section}{Section}{Sections}
\crefname{figure}{Figure}{Figures}
\Crefname{figure}{Figure}{Figures}
\newcommand{\calT}{\mathcal T}
\newcommand{\calA}{\mathcal A}
\newcommand{\calF}{\mathcal F}
\newcommand{\calQ}{\mathcal Q}
\newcommand{\ceil}[1]{\left\lceil{#1}\right\rceil}
\newcommand{\floor}[1]{\left\lfloor{#1}\right\rfloor}
\newcommand{\eps}{\varepsilon}
\newcommand{\sm}{\,\setminus\,}
\newcommand{\abs}[1]{\left | #1 \right |}
\newcommand{\set}[1]{\left \{ #1 \right \}}
\definecolor{shade}{RGB}{235,235,235}
\newcommand{\Oh}{O}
\newcommand{\etal}{{\em et al.}}
\newcommand*\samethanks[1][\value{footnote}]{\footnotemark[#1]}
\title{Optimal induced universal graphs and adjacency labeling for trees}
\author{Stephen Alstrup\thanks{Research partly supported by the FNU
    project AlgoDisc - Discrete Mathematics, Algorithms, and Data Structures.}}
\author{Søren Dahlgaard\thanks{Research partly supported by Mikkel Thorup's
    Advanced Grant from the Danish Council for Independent Research
    under the Sapere Aude research career programme.}}
\author{Mathias Bæk Tejs Knudsen\samethanks[1]\ \samethanks[2]}
\affil{University of Copenhagen,\\
    \tt{\{s.alstrup,soerend,knudsen\}@di.ku.dk}
}
\date{}
\begin{document}
\setcounter{page}{0}
\maketitle
\begin{abstract}
    We show that there exists a graph $G$ with $\Oh(n)$ nodes, such that
    any forest of $n$ nodes is a node-induced subgraph of $G$. Furthermore, for
    constant arboricity $k$, the result implies the existence of a graph with
    $\Oh(n^k)$ nodes that contains all $n$-node graphs of arboricity $k$
    as node-induced subgraphs, matching a $\Omega(n^k)$ lower bound. The lower
    bound and previously best upper bounds were presented in Alstrup and Rauhe
    [FOCS'02]. Our upper bounds are obtained through a $\log_2 n +\Oh(1)$
    labeling scheme for adjacency queries in forests.

    We hereby solve an open problem being raised repeatedly over decades, e.g.
    in Kannan, Naor, Rudich [STOC'88], Chung [J. of Graph Theory'90],
    Fraigniaud and Korman [SODA'10].
\end{abstract}

\thispagestyle{empty}

\newpage
\setcounter{page}{1}
\section{Introduction}
An \emph{adjacency labeling scheme} for a given family of graphs assigns
\emph{labels} to the vertices of each graph from the family such that given the
labels of two vertices from a graph, and no other information, it is possible
to determine whether or not the vertices are adjacent in the graph. The labels
are assumed to be bit strings, and the goal is to minimize the maximum label
size. A $k$-bit labeling scheme (sometimes denoted
$k$ labeling scheme) uses at most $k$ bits per label. In information theory
adjacency labeling schemes studies goes back to the
1960's~\cite{Breuer66,BF67}, and efficient labeling schemes were introduced
in~\cite{KNR92,muller}. Adjacency labeling schemes are also called
\emph{implicit representation of
graphs}~\cite{spinrad2003efficient,wiki:implicit}.

As an example let $\calA_n$ denote the family of forests with $n$ nodes. Given
a forest $F\in \calA_n$, do the following: Root the trees of $F$ and assign
each node with an id from $[0,n-1]$. Let the label of each node be its id
appended with the id of its parent. A test for adjacency is then simply
to test whether the id of one of the nodes equals the stored
parent id of the other node. The labels assigned to the nodes have length $2
\lceil \log n \rceil$ bits\footnote{Throughout this paper we use $\log$ for
$\log_2$.}.

Closely related to adjacency labeling schemes are \emph{induced-universal
graphs} also studied in the 1960's~\cite{moon1965minimal,Rado64}. A graph
$G=(V, E)$ is said to be an induced-universal graph for a family $\cal F$ of
graphs, if it contains all graphs in $\cal F$, as node-induced subgraphs. A
graph $H=(V',E')$ is contained in $G$ as a node-induced subgraph if $V'
\subseteq V$ and $E'=\{(v,w)|v,w \in V' \wedge (v,w) \in E\}$. We define
$g_v(\calF)$ to be the smallest number of nodes in any
%$g_v(\text{\family})=|V|$, where $|V|$ is the smallest number of nodes in an
induced-universal graph for $\calF$. From~\cite{KNR92} (some details given
in~\cite{alstruprauhe,spinrad2003efficient}) we have:

\begin{theorem}[\cite{KNR92}] \label{KNRreduction}
	A family, $\calF$, of graphs has a $k$-bit adjacency labeling scheme with
	unique labels iff $g_v(\calF) \leq 2^k$.
\end{theorem}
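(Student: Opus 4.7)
The plan is to prove both implications by direct construction, exploiting the fact that a label can always be identified with a vertex of a single fixed shared graph.

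For the $(\Leftarrow)$ direction, assume $g_v(\calF) \le 2^k$ and fix an induced-universal graph $G = (V, E)$ for $\calF$ with $|V| \le 2^k$. Assign each vertex of $V$ a distinct $k$-bit identifier. Given any $H \in \calF$, fix an induced embedding $\varphi$ of $H$ into $G$ and label each $u \in V(H)$ by the identifier of $\varphi(u)$; uniqueness of identifiers immediately yields uniqueness of labels. The decoder, on two labels, looks up the corresponding vertices in $V$ and reports whether they are joined in $E$. Correctness is immediate because an induced embedding preserves both adjacency and non-adjacency, and the total label length is $k$ bits.

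For the $(\Rightarrow)$ direction, assume $\calF$ admits a $k$-bit unique-label scheme with decoder $D$. Construct $G = (V,E)$ by setting $V = \{0,1\}^k$ and placing an edge $\{u,v\}$ in $E$ exactly when $D$ answers \emph{adjacent} on the pair. Then $|V| = 2^k$. For any $H \in \calF$, the labeling injects $V(H)$ into $V$ (by uniqueness), and because $D$ is correct on this labeling the image induces in $G$ a subgraph isomorphic to $H$. Hence $G$ is induced-universal for $\calF$, and $g_v(\calF) \le 2^k$.

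The only substantive subtlety lies in the $(\Rightarrow)$ direction, where a single graph $G$ must serve uniformly for every member of $\calF$; this is enforced precisely because the decoder is one fixed function, independent of which graph in $\calF$ is being labeled. Minor hygiene concerning the symmetry of $D$ and its value on pairs of equal labels can be absorbed into the scheme's definition or sidestepped by declaring $\{u,v\} \in E$ iff $D(u,v) = D(v,u) = 1$, without affecting the $2^k$ vertex bound.
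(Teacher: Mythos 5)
Your proposal is correct, and it is exactly the standard argument behind this equivalence: the paper itself states the theorem as a citation to [KNR92] without reproducing a proof, so there is no competing approach in the text to compare against. Both directions are the folklore identification of labels with vertices of a single shared graph, and you correctly isolate the one point that needs care, namely that in the $(\Rightarrow)$ direction the fixed decoder is what lets one graph on $\{0,1\}^k$ serve all of $\calF$ simultaneously. The only residual pedantry is that ``$k$-bit labels'' could in principle mean labels of length \emph{at most} $k$, in which case the label universe has size $2^{k+1}-1$ rather than $2^k$; the theorem as stated presumes labels drawn from a universe of size $2^k$ (equivalently, exactly $k$ bits each), which is the convention your construction implicitly adopts and is the one intended here.
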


Labels being unique means that no two nodes in the same graph from $\calF$ will be given the same label.

Combining the $2\lceil \log n \rceil$-bit labeling scheme above with
Theorem~\ref{KNRreduction} gives $g_v(\calA_n)=\Oh(n^2)$. Closely related, a \emph{universal} graph for ${\cal F}$ is a graph that contains each graph from ${\cal F}$ as a subgraph, not necessarily induced. The challenge is to construct universal graphs with as few edges as possible. Let $f_e(\calF)$ denote the minimum
number of edges in a universal graph for $\calF$. In a series of
papers~\cite{BCEGS82,Chung90,CG78,CG79,CG83,CGP76,Nebesky75} it was
established that $f_e(\calA_n)=\Theta (n\log n)$.
Let $G=(V,E)$ be any universal graph for any family $\mathcal{H}$ of acyclic
graphs. In~\cite{Chung90} Chung shows $g_v(\mathcal{H}) \leq 2|E|+|V|$ and,
combined with bounds for $f_e(\calA_n)$, concludes that $g_v(\calA_n)= \Oh(n
\log n)$. As the bounds for $f_e(\calA_n)$ are tight it is not possible to
improve the bounds for $g_v(\calA_n)$ using the techniques of \cite{Chung90}.
However, for the family of graphs of forests with bounded degree and $n$ nodes,
denoted $\mathcal{A}^B_n$, there exists a universal graph with $n$ nodes and
$\Oh(n)$ edges~\cite{BCLR86,BCLR89}, giving
$g_v(\mathcal{A}^B_n)=\Oh(n)$~\cite{Chung90}.

Chung's results~\cite{Chung90} combined with Theorem~\ref{KNRreduction} give a
$\log n+\log \log n+\Oh(1)$
adjacency labeling scheme for forests, and $\log
n+\Oh(1)$ for bounded degree forests. In 2002 Alstrup and
Rauhe~\cite{alstruprauhe} gave a $\log n + \Oh(\log^* n)$ adjacency labeling
scheme for general forests\footnote{$\log^*$ is the number of times $\log$
should be iterated to get a constant.}. Adjacency labeling schemes using $\log
n + \Oh(1)$ bits are given in~\cite{bonichon2006short,cyrilbounded,bonichon}
for bounded degree forests and caterpillars, in~\cite{fraigniaudkorman2} for
bounded depth trees, and in~\cite{Fraigniaud2009randomized} the case allowing
1-sided errors. Adjacency labeling schemes for forests are also considered
in~\cite{icalpnoy14,KM01}. Table~\ref{tab:adjacency2} summarizes the results.

\begin{table}[htbp]
    \centering
    \begin{tabular}{l|c|c}
        \toprule
        \bf Graph family &  \bf Upper bound & \bf Reference \\
        \midrule
        Forests of bounded degree & $\Oh(n)$& \cite{Chung90} \\
        Forests  & $n2^{\Oh(\log^*n)}$& \cite{alstruprauhe} \\
        Caterpillars & $\Oh(n)$& \cite{bonichon2006short}  \\
        Trees of depth $d$  & $\Oh(nd^3)$ & \cite{fraigniaudkorman2} \\
        \midrule
        Forests & $\Oh(n)$ & This paper \\
        \bottomrule
    \end{tabular}
    \caption{Size of induced-universal graphs for various families of
    forests.}
    \label{tab:adjacency2}
\end{table}

While minimizing the label size is the main goal of a labeling scheme, we
sometimes also seek to reduce the running time. The time used to assign labels
to the nodes is called the \emph{encoding time}, and the time used to decide
whether two nodes are adjacent or not is called the decoding time.
In~\cite{bonichon2006short,cyrilbounded,bonichon} described above the encoding
time is $\Oh(n)$ and decoding time is $\Oh(1)$.

Addressing a problem repeatedly raised the last decades, e.g.
in~\cite{icalpnoy14,bonichon2006short,Chung90,CG78,CG79,Fraigniaud2009randomized,fraigniaudkorman2,gavoille2007shorter,KNR92}
we show:

\begin{theorem} \label{optimaladjacency}
    There exists an adjacency labeling scheme for $\calA_n$
    %forest with $n$ nodes
    using unique labels of length $\log n+\Oh(1)$ bits with $\Oh(1)$ decoding
    time and $\Oh(n)$ encoding time in the word-RAM model.
\end{theorem}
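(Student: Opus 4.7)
The plan is to produce unique integer labels in the range $\{1, \ldots, cn\}$ for a constant $c$, so that each label fits in $\log n + O(1)$ bits and adjacency of two nodes is decidable in $O(1)$ time from the two labels using a small precomputed table. I would first reduce to a single rooted tree $T$ (adding a virtual super-root if the input has several trees) and perform a heavy-path decomposition of $T$. On each heavy path the nodes are numbered consecutively from the top, so two nodes on the same heavy path are adjacent iff their positions differ by exactly one. The only remaining adjacencies are the \emph{light} parent-child edges, and each such edge involves a node $v$ that is the top of its heavy path together with its parent $u$ lying on a different heavy path.

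Next I would further decompose $T$ into micro-trees of size $s = \Theta(\log n / \log\log n)$, chosen so that the number of distinct micro-tree shapes is $2^{O(s)} = n^{o(1)}$. This bound lets me precompute, in sublinear total time, a shape-indexed lookup table that resolves adjacency inside a micro-tree in $O(1)$ time, which is crucial for keeping the overall encoding time linear. Each node's label packs three fields: (i)~a micro-tree id in $\log(n/s)$ bits, (ii)~a position within the micro-tree in $\log s$ bits, and (iii)~a constant-size type field identifying the micro-tree's shape and whether the node is the \emph{portal} of its micro-tree. Two nodes in the same micro-tree are then handled by the lookup table, and two nodes in different micro-trees can be adjacent only if one of them is a portal, whose parent must be pinpointed inside the parent micro-tree.

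The main obstacle is pushing the additive overhead all the way from $O(\log\log n)$, which is what the scheme above naively produces once one also stores the in-parent position of each portal, down to $O(1)$ bits; this is precisely the gap between the previous best bound of Alstrup and Rauhe \cite{alstruprauhe} and the optimal bound. The key additional ideas I would deploy are: (a)~number the micro-trees using a DFS order on the macro-tree so that the id of a portal's parent micro-tree can be recovered from the portal's own micro-tree id in $O(1)$ time without dedicated storage; (b)~exploit the slack in using the range $\{1, \ldots, cn\}$ rather than $\{1, \ldots, n\}$ to partition the id space into a portal-region and a non-portal region, allowing portal-specific information to be encoded by which region a label belongs to; and (c)~amortise the portal-to-parent information across the sibling portals of a common parent micro-tree, so that the total number of portal-description bits is $O(n)$ and can be distributed as an $O(1)$ overhead per label. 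Finally, since the heavy-path decomposition, the micro-tree partition, the construction of the lookup tables, and the assignment of the integer labels are all linear-time routines in the word-RAM model, the $O(n)$ encoding time follows immediately.
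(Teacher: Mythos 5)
Your proposal has a fundamental gap that invalidates the micro-tree/lookup-table core of the argument. In an adjacency labeling scheme the decoder is a function of the two labels \emph{only} --- it has no access to the tree or to any auxiliary structure built at encoding time. A ``shape-indexed lookup table'' precomputed by the encoder is therefore unusable at query time unless the shape itself is written into the labels; but a micro-tree on $s=\Theta(\log n/\log\log n)$ nodes has $2^{\Theta(s)}$ possible shapes, so a field identifying the shape costs $\Theta(\log n/\log\log n)$ bits, not $O(1)$, and the label size is no longer $\log n+O(1)$. A ``constant-size type field identifying the micro-tree's shape'' cannot exist. For the same reason your amortization idea (c) cannot work: the size of a labeling scheme is the \emph{maximum} label length, and the decoder sees only the two queried labels, so the $\Theta(\log\log n)$ bits of portal-to-parent information needed to answer a specific query $(u,v)$ cannot be ``distributed'' onto the labels of other nodes. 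Ideas (a) and (b) gesture at the right kind of trick --- making a child's position within its parent's range deducible from the child's own id --- but they are not developed into a mechanism, and this is exactly where all the work lies.

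The paper takes a different route, with no micro-trees and no tables external to the labels. It assigns each node an integer id in a range of size $O(n)$ so that (i) the ids of the nodes in a node $u$'s light subtree form an interval $I(u)$ starting at $id(u)$, subdivided according to the ``weight class'' $\lfloor\lg\gamma\rfloor$ of the light children, (ii) $id(u)$ has $k(u)\approx \gamma(u)-2\lg\gamma(u)$ trailing zeros, which frees up enough bits in the label to store, inside the label itself, $(1+1/\gamma(u)^3)$-approximations of the $O(\lg\lg n)$ interval breakpoints, plus a ``restricted light depth'' used to certify the parent/child relation for apex nodes without storing the true light depth. The main technical content (Lemma~\ref{pwlemma}) is that the accumulated multiplicative error of these approximations over all light heights keeps the total id range at $O(n)$, so that $\lg N - k(u) + O((\lg k(u))^2) \le \lg n + O(1)$. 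Without an analogue of this id-interval machinery your sketch does not close the gap between $\log n + \omega(1)$ and $\log n + O(1)$.
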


In our solution the decoder does not know $n$ in advance. The importance of the
problem is emphasized by it repeatedly and explicitly being raised as a central
open problem (see appendix~\ref{whatisthat}). Theorem~\ref{optimaladjacency}
establishes that adjacency labeling in forests requires $\log n +
\Theta(1)$ bits. To see this, consider the path of length $n$ as well as the
star on $n$ nodes. These two graphs may share at most $n/2$ labels,
giving a $\log 1.5n = \log n + \Omega(1)$ lower bounds. We note that
this lower bound may be slightly improved using the result of
\cite{Otter48}.
%As at most $n/2$ of the labels for
%the nodes on a path can be the same as for the nodes on a star, we have a lower
%bound of $\log 1.5n = \log n + \Omega(1)$ showing that the upper bound of $\log
%n + \Oh(1)$ is tight.
%, giving us tight bound of $\log n + \Theta(1)$.

%\subsection{Small distances in trees}
%Let $\lancestor(x,y,k)$ be true iff $y$ is the ancestor of $x$ whose depth is $k$ smaller than $x$ depth. In~\cite{KM01} is given a $\log n+\Oh(\sqrt{d\log n})$-bit scheme to answer $\lancestor(x,y,k)$ for $k \leq d$. In~\cite{alstrupbillerauhe} they give $\log n +\Theta(\log \log n)$ for labeling scheme supporting both parent and sibling queries. More general~\cite{alstrupbillerauhe} shows the distance between two nodes, can be determined, if it is at most $k$, for some constant $k$, using labels of size $\log n+\Oh(\log \log n)$, which is optimal for $k>1$. In this paper we give a $\log n + \Oh(k * \log \log^*n + k^2)$ for $\lancestor(x,y,k)$.

\subsection{Graphs with bounded arboricity}
Let $\calF$ and $\calQ$ be two families of graphs and let $G$ be an
induced-universal graph for $\calF$.
%Let $G$ be an induced-universal graph for a family $\mathcal{F}$.
Suppose that every graph in the family $\mathcal{Q}$ can be edge-partitioned
into $k$ parts, each of which forms a graph in $\mathcal{F}$.
In this case, it was shown by Chung~\cite{Chung90} that
$g_v(\mathcal{Q}) \leq |V(G)|^k$. She considered the family, $\calA_n^k$ of
graphs with arboricity $k$ and $n$ nodes.
%and considers the family \arbork of graphs with arboricity $k$ and $n$ nodes.
A graph has \emph{arboricity} $k$ if the edges of
the graph can be partitioned into at most $k$ forests. By combining the
above result with $g_v(\calA_n) = O(n\log n)$ she showed that
%Now combining $g_v(\mathcal{Q}) \leq |V(G)|^k$ and $g_v(\agraphs)= \Oh(n \log n)$
%Chung~\cite{Chung90} achieves
$g_v(\calA_n^k)=\Oh((n\log n)^k)$ improving the bound of $n^{k+1}$
from~\cite{KNR92}. For constant arboricity $k$, it follows
from~\cite{alstruprauhe} that $\Omega(n^k)= g_v(\calA_n) \leq n^k
2^{\Oh(\log^*n)}$.  Combining Chung's reduction~\cite{Chung90}
%$g_v(\mathcal{Q}) \leq |V(G)|^k$
with Theorem~\ref{KNRreduction} and~\ref{optimaladjacency} we show that:

\begin{theorem} \label{arboricity}
There exists an induced-universal graph of size $\Oh(n^k)$ for the family of
graphs with constant arboricity $k$ and $n$ nodes.
\end{theorem}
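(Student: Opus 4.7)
The plan is to obtain \Cref{arboricity} as a direct corollary of \Cref{optimaladjacency}, combined with two reductions already recalled in the paper: the equivalence between adjacency labeling schemes and induced-universal graphs of \Cref{KNRreduction}, and Chung's edge-partition reduction from \cite{Chung90}.

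First, I would invoke \Cref{optimaladjacency} to get an adjacency labeling scheme for $\calA_n$ with unique labels of length $\log n + \Oh(1)$ bits. By \Cref{KNRreduction}, this scheme is equivalent to the existence of an induced-universal graph $G$ for $\calA_n$ with
\[
|V(G)| \leq 2^{\log n + \Oh(1)} = \Oh(n).
\]

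Next, I would apply Chung's reduction from \cite{Chung90}, as quoted above: if every graph in a family $\calQ$ can be edge-partitioned into $k$ graphs, each belonging to a family $\calF$ that admits an induced-universal graph $G$, then $g_v(\calQ) \leq |V(G)|^k$. Since, by the definition of arboricity, every graph in $\calA_n^k$ decomposes into at most $k$ forests on (at most) $n$ nodes, i.e. into $k$ members of $\calA_n$, the reduction applied with $\calF = \calA_n$ and the graph $G$ above yields
\[
g_v(\calA_n^k) \leq |V(G)|^k = \Oh(n)^k = \Oh(n^k),
\]
which is exactly the statement of \Cref{arboricity}.

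There is no genuine obstacle at this stage: the entire difficulty of the argument has been absorbed into \Cref{optimaladjacency}, and both the KNR equivalence and Chung's $|V(G)|^k$ bound are used as black boxes. I would only need to remark that the bound is tight for constant $k$, matching the $\Omega(n^k)$ lower bound stated in the introduction via \cite{alstruprauhe}.
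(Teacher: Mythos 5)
Your proposal is correct and follows exactly the route the paper itself indicates: it derives Theorem~\ref{arboricity} by combining the $\log n + \Oh(1)$ unique-label scheme of Theorem~\ref{optimaladjacency} with the equivalence of Theorem~\ref{KNRreduction} to get an $\Oh(n)$-node induced-universal graph for $\calA_n$, and then applies Chung's edge-partition reduction to obtain $g_v(\calA_n^k) \leq \Oh(n)^k = \Oh(n^k)$ for constant $k$. No gaps; this matches the paper's argument.
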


Achieving results for bounded degree graphs by reduction to bounded arboricity
graphs is e.g. used in~\cite{KNR92}. This can be done as graphs with bounded
degree $d$ have arboricity bounded by
$\floor{\frac{d}{2}}+1$~\cite{chartrand68,Lovasz66}.

\subsection{Adjacency labeling and induced-universal graphs for other families}
Induced-universal graphs (and hence adjacency labeling schemes) are given for
tournaments~\cite{BW75,moon1968topics}, hereditary
graphs~\cite{Lozin97,Scheinerman199416}, threshold graphs~\cite{CPC:1772872},
special commutator graphs~\cite{Pisanski:1989:UCG:82296.82313}, bipartite
graphs~\cite{Lozin2007}, bounded degree graphs~\cite{TV98}, and other
cases~\cite{bollobas1981graphs,JGT:JGT3190120204}.  Using  universal graphs
constructed by Babai \etal~\cite{BCEGS82}, Bhatt \etal~\cite{BCLR89} and Chung
\etal~\cite{CG78,CG79,CG83,CGP76}, Chung~\cite{Chung90} obtains the current
best bounds for e.g. induced-universal graphs for bounded degree graphs being
planar or outerplanar. Many other results use reductions from~\cite{Chung90},
e.g. the induced-universal graphs for bounded degree graphs
\cite{Butler_induced-universalgraphs,Esperet2008}. The result from
~\cite{Esperet2008}, as many others, is achieved by reduction to a universal
graph with bounded degree~\cite{alon2007sparse,AlonCapalbo2008}. Other results
for universal graphs is e.g. for families of graphs such as
cycles~\cite{Bondy71}, forests~\cite{CGC81,Fishburn85}, bounded degree
forests~\cite{BCLR86,FP87}, and graphs with bounded path-width~\cite{TUK95}.
In~\cite{AlstrupKTZ14} they give a $(\ceil{n/2}+4)$-bit adjacency labeling
scheme for general undirected graphs, improving the $(\lfloor
n/2\rfloor+\lceil\log n\rceil)$ bound of~\cite{moon1965minimal}, almost
matching an $(n-1)/2$ lower bound~\cite{KNR92,moon1965minimal}. An overview of
induced-universal graphs and adjacency labeling can be found
in~\cite{AlstrupKTZ14}.

\subsection{Second order terms for labeling schemes are theoretically significant}
Above it is shown that for adjacency labeling significant work has been done
optimizing the second order term. This is also true for other labeling scheme
operations. E.g. the second order term in the ancestor relationship is improved
in a sequence of STOC/SODA
papers~\cite{AKM01,AlstrupBR03,AR02,fraigniaudkorman2,fraigniaudkorman}
(and~\cite{abiteboul,KMS02})  to $\Theta(\log \log n)$, giving labels of size
$\log n+\Theta(\log \log n)$. Lastly, an algorithm giving both a simple and
optimal scheme was given in \cite{Dahlgaard15ancestry}. Somewhat related,
\emph{succinct data structures} (see,
e.g.,~\cite{DPT10,FarzanM13,FarzanM14,MunroRRR12,MR97,patrascu08succinct})
focus on the space used in addition to the information theoretic lower bound,
which is often a lower order term with respect to the overall space used.

\subsection{Labeling schemes in various settings and applications}
By using labeling schemes, it is possible to avoid costly access to large
global tables, computing instead locally and distributed. Such properties are
used in applications such as XML search engines~\cite{AKM01}, network routing
and distributed algorithms~\cite{Cowen01,EilamGP03,Gavoille01,ThZw05}, dynamic
and parallel settings ~\cite{CohenKaplan2010,dynamicKormanP07}, and various
other applications~\cite{Korman2010,peleg2,SK85}.

Various computability requirements are sometimes imposed on  labeling
schemes \cite{AKM01,KNR92,siamcompKatzKKP04}. This paper assumes the RAM model
and mentions the time needed for encoding and decoding in addition to the label size.

%To be inserterted in camerea version
%Distance labeling is also well studied: For general undirected graphs $\ceil{(n-1) \log 3}$ distance labeling scheme is given in~\cite{grahampollak,winkler}, and %recently~\cite{distalstrup} improved to $n \frac{\log 3}{2}+o(n)$.

%next line to be removed if we insert again small dist in tree subsection
Closely related to adjacency is small distances in trees. This is studied
by Alstrup et al. in~\cite{alstrupbillerauhe} who among
other things give a $\log n +\Theta(\log \log n)$ labeling scheme supporting
both parent and sibling queries. General distance labeling schemes for various
families of graphs exist, e.g., for trees~\cite{alstrupbillerauhe,peleg},
bounded tree-width, planar and bounded degree graphs~\cite{Gavoille200485},
some non-positively curved plane~\cite{CDV06}, interval~\cite{GP08} and
permutation graphs~\cite{BG09}, and general graphs~\cite{grahampollak,winkler}.
In~\cite{Gavoille200485} it is proved that distance labels require
$\Theta(\log^2 n)$ bits for trees. Approximate distance labeling schemes are
also well studied; see
e.g.,~\cite{GuptaKL03,gupta2005traveling,KL06,Talwar04,Thorup2004distance,TZ01a}.
An overview of distance labeling schemes can be found in~\cite{distalstrup},
and a more general labeling survey can be found in an overview
in~\cite{gavoillepeleg}.

\section{Preliminaries}
In this section we introduce some well-known results and notation.
Throughout this paper we use the convention that $\lg x = \max(1,
\log_2 x)$ for convenience. We assume the word-RAM model of computation.

\paragraph{Trees}
Let $\calT_n$ denote the family of all rooted trees of size $n$ and let $T\in
\calT_n$. We denote the nodes of $T$ by $V(T)$
and the edges by $E(T)$. We let $\abs{T}$ denote the number of nodes in $T$.
For a node $u\in V(T)$, we let $T_u$ denote the subtree
of $T$ rooted in $u$. A node $u$ is an \emph{ancestor} of a node $v$ iff it is
on the unique path from $v$ to the root. In this case we also say that $v$ is a
\emph{descendant} of $u$. A \emph{caterpillar} is a tree whose non-leaf nodes
induce a path. Throughout the paper we will only consider
adjacency labeling in trees, as we may add an ``imaginary root'' to any
forest on $n$ nodes turning it into a tree of size $n+1$. To do this we expend
at most one extra bit to distinguish this from actual nodes.

%Adjacency labeling in a forest with $n$ nodes is
%identical to a labeling scheme answering parent query in a rooted tree with
%$n+1$ nodes~\cite{KNR92}. We will in the remaining only considered parent
%queries in rooted trees.  A \emph{caterpillar} is a tree whose non-leaf nodes
%induce a path.

\paragraph{Heavy-light}
For a node $u$ with children $children(u)=v_1,\ldots, v_k$, with $|T_{v_k}|\geq |T_{v_i}|$
for all $i<k$, we say that the edge $(u, v_k)$ is \emph{heavy}, and the remaining edges
$(u,v_i)$ are \emph{light}. We say that
$heavy(u) = v_k$ is the \emph{heavy child} of $u$. A node $u$ for
which the edge $(parent(u),u)$ is light is called an \emph{apex node}.
For convenience we also define the root to be an apex node. For a node $u$, we
define $children(u)\sm \{heavy(u)\}$ to be the \emph{light children} of $u$.
This is called a \emph{heavy-light decomposition}~\cite{sleatortarjan} as it
decomposes the tree into paths of heavy edges (\emph{heavy paths}) connected by
light edges. We define the \emph{light subtree} of a node $u$ to be $T_u^\ell =
T_u\setminus T_{heavy(u)}$. For a leaf $u$, $T_u^\ell = T_u = u$. The \emph{light
depth} of a node $u$ is the number of light
edges on the path from $u$ to the root. The \emph{light height} of a node $u$
is the maximum number of light edges on a path from $u$ to a leaf in $T_u$.

\begin{lemma}~\cite{sleatortarjan}
    \label{lightHeightLemma}
    Given a tree $T$ and $u \in V(T)$ with light height $x$, $|T_u| \geq 2^{x+1}-1$.
\end{lemma}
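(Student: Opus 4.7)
The plan is to prove the bound by induction on the light height $x$ of $u$ (equivalently, by structural induction on the subtree $T_u$). The base case $x = 0$ is immediate, since $|T_u| \geq 1 = 2^{0+1} - 1$ holds for any nonempty subtree.

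For the inductive step with light height $x \geq 1$, let $h = heavy(u)$ be the heavy child of $u$ (which exists since $x \geq 1$ forces $u$ to have a child). The key is to split into two cases depending on which path realizes the light height of $u$. In the first case, the path achieving the light height proceeds from $u$ to $h$ via the heavy edge, so $h$ itself has light height $x$; by the induction hypothesis applied to $h$, we get $|T_u| \geq 1 + |T_h| \geq 1 + (2^{x+1} - 1) = 2^{x+1}$, which is more than required.

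In the second case, the path of light edges from $u$ begins with a light edge to some light child $v$, and $v$ must have light height $x - 1$. The induction hypothesis applied to $v$ gives $|T_v| \geq 2^{x} - 1$. By the defining property of the heavy-light decomposition, $|T_h| \geq |T_v|$, so
\[
    |T_u| \;\geq\; 1 + |T_h| + |T_v| \;\geq\; 1 + 2\bigl(2^{x} - 1\bigr) \;=\; 2^{x+1} - 1,
\]
which completes the induction.

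The only step requiring care is the case analysis: one must recognize that the light height is achieved either entirely within the heavy subtree (a pure recursion on $h$) or by first leaving along a light edge to some light child $v$; the latter case is where the heavy-light property $|T_h| \geq |T_v|$ is essential to double the contribution. Neither case is technically difficult, but bookkeeping the off-by-one in the exponent depending on which case occurs is the main pitfall to avoid.
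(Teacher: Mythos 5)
Your proof is correct. The paper does not prove this lemma at all---it is quoted from Sleator and Tarjan with a citation and no argument---so there is no in-paper proof to compare against; your induction is the standard one and both cases check out. One small point of care: in your first case the heavy child $h$ has the \emph{same} light height $x$ as $u$, so induction on $x$ alone would not apply there; it is the structural induction on $T_u$ (which you mention parenthetically) that actually carries the argument, and you should state that as the induction being performed rather than as an aside.
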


\paragraph{Bit strings}
A bit string $s$ is a member of the set $\{0,1\}^*$. We denote the length of a
bit string $s$ by $|s|$, the $i$th bit of $s$ by $s_i$, and the concatenation of
two bit strings $s,s'$ by $s\circ s'$ (i.e. $s = s_1\circ s_2\circ\ldots\circ
s_{|s|}$). We say that $s_1$ is the most significant bit of $s$ and $s_{|s|}$
is the least significant bit. For an integer $x$ we let $0^x$ and $1^x$ denote
the strings consisting of exactly $x$ $0$s and $1$s respectively. Let $a$ be an
integer and let $s$ be the bit string representation of $a$.
Define the function $wlsb(a, k)$ to be $s_1\circ s_2\circ\cdots\circ
s_{|s|-k}$, i.e. the bit string of $a$ without the $k$ least significant
bits. When $k>|s|$ we define $wlsb(a, k)$ to be the empty string.
When constructing a labeling scheme we often wish to concatenate several
bit strings of unknown length. We may do this using the Elias $\gamma$ code
\cite{Elias75} to encode a length $k$ bit string with $2k$ bits and decode
it in $O(1)$ time for $k=O(w)$\footnote{Here, $w$ is the word size.}, using
standard bit operations.
%To do this we will use the following
%encoding:
%\begin{definition}
%    Let $s$ be some bit string of even length (otherwise prepend with a $0$).
%    Define the strings:
%    \begin{align*}
%        s_e &= 0\circ s_2\circ 0\circ s_4\circ \ldots \circ 0\circ s_{|s|} \\
%        s_o &= s_1\circ 0\circ s_3\circ 0\circ \ldots \circ s_{|s|-1}\circ 1\ .
%    \end{align*}
%    Then define $bstr(s) = s_e\circ 0\circ s_o$ to be the \emph{0-1-encoding}
%    of $s$.
%\end{definition}
%Also define the function
%\[
%    bstr(a) = 0\circ s_1\circ\cdots\circ0\circ s_{|s|-1}\circ1\circ s_{|s|}\ .
%\]
%$bstr(a) = 0\circ s_1\circ\cdots\circ0\circ s_{|s|-1}\circ1\circ s_{|s|}$.
%We can then deduce $s$ from $bstr(s)$ without knowing the length
%of $bstr(s)$ in advance (e.g. if $bstr(s)$ is embedded in a longer bit string).
%Furthermore $|bstr(s)| = 2|s| + 1$.

For an integer $a$ we will often use $a$ to denote the bit string
representation of $a$ when it is clear from the context. We will use
$[a]_\gamma$ to denote the Elias $\gamma$ encoding of $a$.

\paragraph{Labeling schemes}
An \emph{adjacency labeling scheme} for trees of size $n$ consists of an
\emph{encoder}, $e$, and a \emph{decoder}, $d$. Given a tree $T\in\calT_n$, the
encoder computes a mapping $e_T : V(T)\to \set{0,1}^*$ assigning a \emph{label} to
each node $u\in V(T)$. The decoder is a mapping $d:\set{0,1}^*\times \set{0,1}^*\to
\{{\tt True},{\tt False}\}$ such that given any tree $T\in\calT_n$ and any
pair of nodes $u,v\in V(T)$ we have $d(e_T(u),e_T(v)) = {\tt True}$ iff $(u,v)\in
E(T)$. Note that the decoder does not know $T$. The \emph{size} of a labeling scheme is defined as the maximum label size
$|e_T(u)|$ over all trees $T\in\calT_n$ and all nodes $u\in V(T)$. If for all trees $T\in\calT_n$ the mapping $e_T$ is injective we say that the
labeling scheme assigns \emph{unique} labels. The labeling schemes constructed
in this paper all assign unique labels and the decoder does not know $n$.

\paragraph{Approximation}
Given a non-negative integer $a$ and a real number $\eps > 0$, a $(1+\eps)$-approximation of $a$
is an integer $b$ such that $a\le b < (1+\eps)a$. We also define
$b = 0$ to be the unique $(1+\eps)$-approximation of $a = 0$.
\begin{lemma}\label{lem:approximation}
    Given an integer $a$ and a number $\eps\in (0,1]$, we can find a
    $(1+\eps)$-approximation and represent it using $O(\lg
    \lg a + \lg \frac{1}{\eps})$ bits.
    Furthermore, if $\eps = \frac{1}{\delta}$, where $\delta$ is a
    positive integer that can be stored using $O(1)$ words, we can find this
    approximation in $O(1)$ time.
\end{lemma}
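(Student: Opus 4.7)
The plan is to use a floating-point style representation: encode $b$ as a pair (exponent, mantissa), where the exponent is roughly $\lfloor \lg a \rfloor$ and the mantissa keeps the top $\Theta(\lg(1/\eps))$ bits. Specifically, if $a = 0$ we set $b = 0$. Otherwise let $k = \lfloor \log_2 a\rfloor$ (so $2^k \le a < 2^{k+1}$) and $t = \lceil \log_2(1/\eps) \rceil$. If $k \le t$, take $b = a$; else take $b = \lceil a / 2^{k-t} \rceil \cdot 2^{k-t}$, i.e.\ round $a$ up to the nearest multiple of $2^{k-t}$.

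The approximation bound is straightforward: clearly $a \le b < a + 2^{k-t}$, and since $2^{k-t} \le a \cdot 2^{-t} \le a\eps$, we get $a \le b < (1+\eps)a$. For the label length, $b$ is fully determined by $k$ and the string of its top $t+1$ significant bits (its leading bit is always $1$, so really $t$ bits of mantissa suffice). We store the pair using $[k]_\gamma$ followed by the mantissa bits: this costs $O(\lg k) + O(t) = O(\lg \lg a + \lg\tfrac{1}{\eps})$ bits as required, and is self-delimiting thanks to the Elias $\gamma$ encoding.

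For the running time under the assumption $\eps = 1/\delta$ with $\delta$ fitting in $O(1)$ machine words: computing $k = \lfloor \log_2 a\rfloor$ and $t = \lceil \log_2 \delta\rceil$ can each be done in $O(1)$ time on the word-RAM (standard most-significant-bit operation, or a precomputed table for $\delta$ of word-size). The rounding $\lceil a/2^{k-t}\rceil \cdot 2^{k-t}$ is a constant number of shifts, additions and bitwise ANDs on $O(1)$-word integers, so it runs in $O(1)$ time. Concatenating $[k]_\gamma$ with the mantissa bits is also a constant number of word-RAM operations since $k, t = O(w)$.

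The main subtlety is just the boundary case when $k \le t$: there we already have enough room to store $a$ exactly within the budget (since then $a < 2^{t+1}$ so $a$ itself fits in $O(\lg (1/\eps))$ bits), and taking $b = a$ trivially satisfies the approximation. Apart from that case analysis, the argument is essentially the observation that standard floating-point with $\Theta(\lg(1/\eps))$ mantissa bits gives a $(1+\eps)$-approximation, combined with the word-RAM primitives for extracting the leading bits.
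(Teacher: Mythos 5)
Your proof is correct, but it takes a different representational route from the paper. The paper rounds $a$ up to the nearest power of $(1+\eps')$ with $\eps'=1/\delta$, $\delta=\ceil{\eps^{-1}}$, and stores the pair $(\delta,k)$ with $k=\ceil{\log_{1+\eps'}a}$, both Elias-$\gamma$ encoded; you instead round $a$ up to the nearest multiple of $2^{k-t}$ and store an (exponent, mantissa) pair in floating-point style. Both yield $b\ge a$, $b<(1+\eps)a$, and $O(\lg\lg a+\lg\frac1\eps)$ bits. Your version has two concrete advantages: the output $b$ is manifestly an integer (the paper's $(1+1/\delta)^k$ is in general only rational, so the paper is implicitly rounding), and the $O(1)$-time claim is more immediate, since you only need a most-significant-bit operation and shifts rather than a ceiling of a logarithm to base $1+\eps'$. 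Two small things to tighten: when $a$ is just below $2^{k+1}$ the rounded value can equal $2^{k+1}$ exactly, so the mantissa can overflow its $t+1$ bits --- either allow one more bit or renormalize by incrementing the exponent (this does not affect the asymptotics); and the mantissa field of length $t$ is not self-delimiting by itself, so unless $t$ is known from context (as it is in the paper's application, where the strings are padded to a stored common length $M$) you should also prepend $[t]_\gamma$, costing only $O(\lg\lg\frac1\eps)$ extra bits and staying within the budget.
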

%\fullversion{
\begin{proof}
We will use a single bit to distinguish between the cases $a = 0$ and $a > 0$,
so assume $a > 0$. Let $\delta = \ceil{\eps^{-1}}$ and $\eps' = \delta^{-1}$.
Let $k = \ceil{\log_{1+\eps'} a}$. Then $(1+\eps')^k \ge a > (1+\eps')^{k-1}$.
Hence if we let $b = (1+\eps')^k$ we have $a \le b < a(1+\eps') \le a(1+\eps)$.
In order to encode $b$ it suffices to encode $\delta$ and $k$. We can do this
using $2\ceil{\lg \delta} + 2\ceil{\lg k}$ bits using the Elias $\gamma$
coding. Note that:
\[
    k-1 < \log_{1+\eps'} a = \frac{\log_2 a}{\log_2 (1 + \eps')}
\]
Taking $\log_2$ gives:
\begin{align*}
    \log_2 (k-1)
    &< \log_2 \log_2 a - \log_2 \log_2 (1+\eps') \\
    &= \log_2 \log_2 a + O \left ( 1 + \log_2 \frac{1}{\eps'} \right ) \\
    &= \log_2 \log_2 a + O \left ( 1 + \log_2 \frac{1}{\eps} \right )
\end{align*}
Hence $\lg k = O\left ( \lg \lg a + \lg \frac{1}{\eps} \right )$, and since
$\lg \delta \le 1 + \lg \frac{1}{\eps}$
the proof is finished.
\end{proof}

We will use $\FuncSty{Approx}(a,\eps)$ to denote a function returning a
$(1+\eps)$-approximation of $a$ as described above.
%}

\section{A simple scheme for caterpillars}\label{sec:caterpillar}
As a warmup, we describe a simple adjacency labeling scheme of size $\lg n +
O(1)$ for caterpillars. The idea is to use a variant of this scheme recursively
when labeling general trees. The scheme we present uses ideas similar to that of~\cite{bonichon2006short}.

Let $p = (u_1,\ldots,u_{|p|})$ be a longest path of the caterpillar and root the
tree in $u_1$. We assign an id and an interval
$I(u_i) = [id(u_i), id(u_i)+l(u_i))$ to each node
$u_i$, such that $id(v)\in I(u_i)$ iff $v$ is a non-root apex node (all leaves
except $u_{|p|}$ are apex nodes) and $u_i$ is
the parent of $v$. The ids of the $u_i$s are assigned such that given the label
of $u_i$ we can deduce $id(u_{i+1})$ for $i < |p|$. We first calculate the interval sizes $l$ and
next assign the $id$s. Both steps can be done in $O(n)$ time.

\paragraph{Interval sizes}
%Let $u_1,\ldots, u_{|p|}$ be the nodes of the single heavy path $p$ in $T$.
Let $\gamma_i = \ceil{\lg |T_{u_i}^\ell|}$. For each node $u_j$ now define the
$|p|$-dimensional vector $\beta_j$ as $\beta_j(i) = \gamma_j - |i-j|$. Let
$k_i = \max_{j=1\ldots |p|} \beta_j(i)$. This ensures that
$(k_i-k_{i+1})\in\{-1,0,1\}$ for all $i\in\{1,\ldots,|p|-1\}$. The
process is illustrated in Figure~\ref{fig:betas}. The interval size of node $u_i$ is
now set to $l(u_i)=2^{k_i}$.

\begin{figure}[htbp]
    \centering
    \includegraphics[width=.4\textwidth]{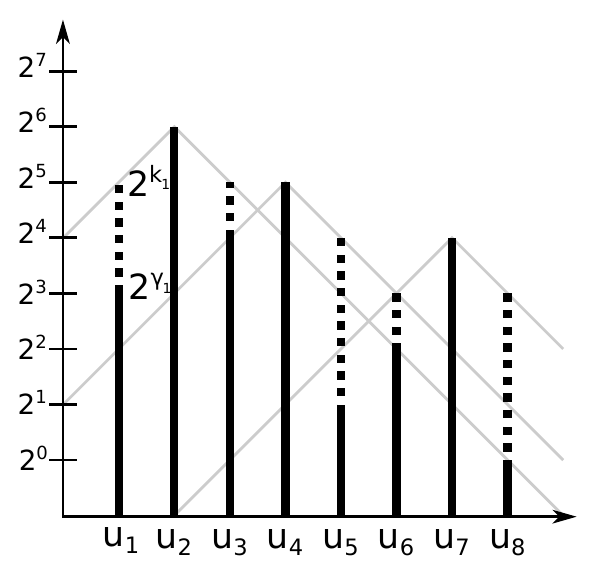}
    \caption{Example of how the $\beta_j$s are used to ensure that neighbouring
    nodes have $(k_i - k_{i+1})\in \{-1,0,1\}$.}
    \label{fig:betas}
\end{figure}

\paragraph{Id assignment}
%Let $u_1,\ldots,u_{|p|}$ be the nodes of the heavy path $p$ in $T$ with $u_1$
%being the root of $T$.
The idea is to assign $id(u_i)$ such that the
$k_i$ least significant bits of $id(u_i)$ are all $0$. We first assign
the id for $u_1$ and its children, then $u_2$ and its children, etc. The
procedure is as follows:
\begin{enumerate}
    \item Assign $id(u_i) = x$, where $x$ is the smallest integer having $0$ as
        the $k_i$ least significant bits satisfying $x\ge id(u_{i-1}) +
        l(u_{i-1})$. For $u_1$ we set $id(u_1) = 0$.
    % stephen rettet nedenstaende udkommenteret til
    %\item Let $v_1,\ldots,v_{|T_{u_i}^\ell|-1}$ be the children of
     %   $u_i$. Assign $id(v_j) = id(u_i) + j$. Note that
     %   $id(v_{|T_{u_i}^\ell|}) < id(u_i) + l(u_i)$.
    \item Let $v_1,\ldots,v_{|T_{u_i}^\ell|-1}$ be the light children of
        $u_i$. Assign $id(v_j) = id(u_i) + j$. Note that
        $id(v_{|T_{u_i}^\ell|-1}) < id(u_i) + l(u_i)$.
\end{enumerate}

\paragraph{The label}
For a node $u_i\in p$ we assign the label
\[
    \ell(u_i) = type(u_i)\circ [k_i]_\gamma\circ wlsb(id(u_i), k_i)\ ,
\]
and for $v\notin p$, assign the label
\[
    \ell(v) = type(v)\circ id(v)\ .
\]
Here $type(u)$ is $1$ if $u\notin p$. Otherwise, $type(u_i)$ is
$0xx$, where $xx$ is either $00$, $01$, $10$ or $11$ corresponding to the
following four cases: (00) $u_i=u_{|p|}$, (01) $k_i = k_{i+1} - 1$, (10) $k_i =
k_{i+1}$, and (11) $k_i = k_{i+1} + 1$.

\paragraph{Label size}
First, we let $N$ denote the maximum $id$ assigned by the encoder. Then the label size for a node $u_i\in p$ is $\le 3+2\ceil{\lg k_i}+\ceil{\lg N} - k_i$ and for $v\notin p$, it is $
\le 1 + \ceil{\lg N}$. We will now bound $N$:

%\[
%    \le 3+2\ceil{\lg k_i}+\ceil{\lg N} - k_i
%\]
%and for $v\notin p$, it is
%\[
%    \le 1 + \ceil{\lg N}\ .
%\]

\begin{lemma}\label{lem:caterN}
    Given a caterpillar $T$ with $n$ nodes, the maximum id assigned by our
    encoder, $N$, satisfies
    \[
        N\le 12n\ .
    \]
\end{lemma}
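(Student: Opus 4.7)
The plan is to bound $N$ by summing up how much $id$ can grow from one spine node to the next, and then bounding the resulting sum via the $\gamma_i$'s.

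First I would observe the structural inequality $id(u_{i+1}) - id(u_i) \leq 2^{k_i} + 2^{k_{i+1}} - 1$. This follows from the rule assigning $id(u_{i+1})$: it is the smallest multiple of $2^{k_{i+1}}$ that is at least $id(u_i) + l(u_i) = id(u_i) + 2^{k_i}$, so the rounding-up adds strictly less than $2^{k_{i+1}}$. Telescoping from $id(u_1) = 0$ and then accounting for light children (whose ids lie in $[id(u_{|p|}), id(u_{|p|})+l(u_{|p|}))$) yields
\[
    N \;\leq\; id(u_{|p|}) + l(u_{|p|}) - 1 \;\leq\; 2\sum_{i=1}^{|p|} 2^{k_i}.
\]

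Next I would bound $\sum_i 2^{k_i}$ in terms of the $\gamma_j$'s. By definition $k_i = \max_j (\gamma_j - |i-j|)$, so $2^{k_i} \leq \sum_j 2^{\gamma_j - |i-j|}$. Swapping the order of summation:
\[
    \sum_{i=1}^{|p|} 2^{k_i} \;\leq\; \sum_{j=1}^{|p|} 2^{\gamma_j} \sum_{i=1}^{|p|} 2^{-|i-j|} \;\leq\; 3 \sum_{j=1}^{|p|} 2^{\gamma_j},
\]
since $\sum_{d \in \mathbb{Z}} 2^{-|d|} = 3$.

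Finally, since $\gamma_j = \lceil \lg |T_{u_j}^\ell| \rceil$ with $\lg x = \max(1, \log_2 x)$, we have $2^{\gamma_j} \leq 2 |T_{u_j}^\ell|$ even in the edge case $|T_{u_j}^\ell| = 1$. The light subtrees $T_{u_j}^\ell$ of the spine nodes partition $V(T)$ (every non-spine node is a light child of some spine node, and every spine node is in its own light subtree), so $\sum_j |T_{u_j}^\ell| = n$. Combining,
\[
    N \;\leq\; 2 \cdot 3 \cdot 2 \cdot n \;=\; 12n.
\]

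The main obstacle is really just making sure the constants line up; the conceptual heart of the argument is the telescoping bound and the observation that convolving the exponentials $2^{\gamma_j}$ with the kernel $2^{-|i-j|}$ only inflates their sum by a constant factor, which lets the (potentially much larger) $2^{k_i}$ values be charged back to the sizes of the light subtrees. A small care point is the case $|T_{u_j}^\ell| = 1$ where the paper's convention $\lg 1 = 1$ doubles $2^{\gamma_j}$ relative to $|T_{u_j}^\ell|$; this is already absorbed in the factor of $2$ above.
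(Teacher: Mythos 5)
Your proof is correct and follows essentially the same route as the paper's: bound the ids skipped at each spine node by $2^{k_i}-1$ (equivalently, your telescoping inequality $id(u_{i+1})-id(u_i)\le 2^{k_i}+2^{k_{i+1}}-1$), reduce to $2\sum_i 2^{k_i}$, and then use $2^{k_i}\le\sum_j 2^{\beta_j(i)}$ with the swap of summation and the kernel sum $\sum_{d\in\bbZ}2^{-|d|}=3$. Your last step ($2^{\gamma_j}\le 2|T_{u_j}^\ell|$ and $\sum_j|T_{u_j}^\ell|=n$, including the $\lg 1=1$ edge case) just makes explicit what the paper leaves implicit in concluding $N\le 12n-|p|$.
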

\begin{proof}
    %Let $u_1,\ldots, u_{|p|}$ be the nodes on the heavy path $p$ of $T$.
    First, observe that the number of ids skipped between
    $id(u_{i-1})+l(u_{i-1})$ and $id(u_i)$ is at most $2^{k_i}-1$ as any
    set of $2^{k_i}$ consecutive integers must contain at least one integer
    with $k_i$ $0s$ as least significant bits. Thus, the maximum id is
    bounded by $\sum_{i=1}^{|p|} \left(2^{k_i} -1+ l(u_i)\right) = 2\cdot
    \left(\sum_{i=1}^{|p|} 2^{k_i}\right) - |p|$ and we can bound this using
\[
\left(\sum_{i=1}^{|p|} 2^{k_i}\right)
        \le  \left(\sum_{i=1}^{|p|} \sum_{j=1}^{|p|}
    2^{\beta_j(i)}\right) =  \left(\sum_{j=1}^{|p|} \sum_{i=1}^{|p|}
    2^{\beta_j(i)}\right)
\le \left(\sum_{j=1}^{|p|} \sum_{i=-\infty}^{\infty}
    2^{\gamma_j - |i|}\right) =  \left(\sum_{j=1}^{|p|} 3\cdot 2^{\gamma_j}\right)
\]
concluding that $N \le 12n - |p|$
\end{proof}

\paragraph{Decoding}
Given the labels of $u,v\notin p$ we always answer {\tt False}.

Now assume that we are given the label of at least one node $u_i\in p$. First
we deduce $id(u_i)$ using $[k_i]_\gamma$ and $wlsb(id(u_i),k_i)$.
This also gives us $l(u_i) = 2^{k_i}$. Now there are two cases:
\begin{enumerate}
    \item If the other label is for a node $v\notin p$, we simply read
        $id(v)$ and answer {\tt True} if $id(v)\in [id(u_i), id(u_i) +
            l(u_i))$. Otherwise we answer {\tt False}.
    \item If the other label is for $u_j\in p$, assume without
        loss of generality that $id(u_j) > id(u_i)$. If $type(u_i) = 001$, set
        $x$ to be the smallest integer with the $k_i + 1$ least
        significantly bits set to $0$ satisfying $x\ge id(u_i) + l(u_i)$. If $x
        = id(u_j)$ answer {\tt True}, otherwise answer {\tt False}.

        The other types can be handled similarly.
\end{enumerate}

\section{An optimal scheme for general trees}

In this section we prove Theorem~\ref{optimaladjacency}. Similar to the
caterpillar scheme presented in the previous section we assign an id,
$id(u)$, and interval, $I(u)$, to each node. The interval and id of a node is
assigned such that $id(v)\in I(u)$ iff $v\in T_u^\ell$. The label of a node $u$
will be assigned such that we can infer the following information (loosely
speaking) directly from the label:
\begin{itemize}
    \item The id of the node $u$.
    \item The id of $u$'s heavy child, $heavy(u)$.
    \item The interval $I(u)$ containing the ids of all nodes in $u$'s light subtree.
    \item Auxilliary information to help decide whether $u$ is a light child of
        another node.
\end{itemize}
In order to store this information as part of the label, each node will be
assigned an id with a number of trailing zero bits proportional to the
logarithm of its interval
size corresponding to the $k_i$s of Section~\ref{sec:caterpillar}. Furthermore, we
ensure that the interval size for a node $u$ is proportional to $|T_u^\ell|$
(or simply $|T_u|$ for apex nodes), and call this the \emph{light weight} of
$u$ denoted by $lw(u)$. Intuitively this ensures that nodes with large subtrees
have more ``bits to spare''.

The labels are assigned using a similar two-step procedure as in
Section~\ref{sec:caterpillar}. In the first step we assign the light weight of each
node using a recursive procedure, and in the second step we assign the actual
ids of the nodes based on the given weights. Both steps are handled in $O(n)$
time.
In order to bound the maximum id assigned we introduce the notion of path
weights (to be defined later). The \emph{path weight} of a heavy path $p$ is
denoted $pw(u)$, where $u$ is the apex node of $p$.

%A key feature of this labeling scheme is that
%for every node $u$ we assign an id, $id(u)$, a light weight, $lw(u)$, and an
%interval, $I(u) = \left[id(u),id(u)+lw(u)\right)$. We assign ids such that $id(v) \in I(u)$
%for every node $v$ in the light subtree of $u$. This interval corresponds to
%the interval in the labeling scheme for the caterpillar in \Cref{sec:caterpillar},
%and we also assign the intervals using two steps.
%In the first step we assign the light weight of each node using a recursive procedure,
%and in the second step we assign the id of each node based on the given
%weights. Both steps can be handled in $O(n)$ time.
%Furthermore, in order to bound the maximum id we introduce the notion of path weights.
%The \emph{path weight} of a heavy path $p$ is denoted $pw(u)$ where $u$ is
%the apex node of $p$.

\subsection{Weight classes and restricted light depth}
The auxilliary information mentioned above is primarily used to determine
adjacency between an apex node and its parent. A classic way of doing this is
to use the light depth of both nodes and check that it differs by exactly one.
However, the light depth of a node with a small subtree could potentially be
big in comparison, and thus we cannot afford to store it. To deal with this we
introduce the following notion of weight classes and restricted light depth:
\begin{definition}\label{def:wc}
    Let $T$ be a rooted tree and $u$ some node in $T$. Define
    \begin{equation}
        \gamma(u) =
        \begin{cases}
            \floor{\lg |T_u|} & \text{if $u$ is an apex node} \\
            \floor{\lg |T_u^\ell|} & \text{otherwise.}
        \end{cases}
    \end{equation}
    The \emph{weight class} of $u$ is defined as $wc(u) = \floor{\lg
    \gamma(u)}$.
\end{definition}
\begin{definition}\label{def:rld}
    Let $T$ be a rooted tree and $u$ some node in $T$. Define $wtop(u)$ to be
    the ancestor of $u$ with smallest depth such that every node on the path
    from $u$ to $wtop(u)$ has weight class $\le wc(u)$. The \emph{restricted
    light depth} of $u$ is the number of light edges on the path from $u$ to
    $wtop(u)$ and is denoted by $rld(u)$.
\end{definition}
An illustration of these definitions can be seen in Figure~\ref{fig:rld}.
\begin{figure}[htbp]
    \centering
    \includegraphics[width=.6\textwidth]{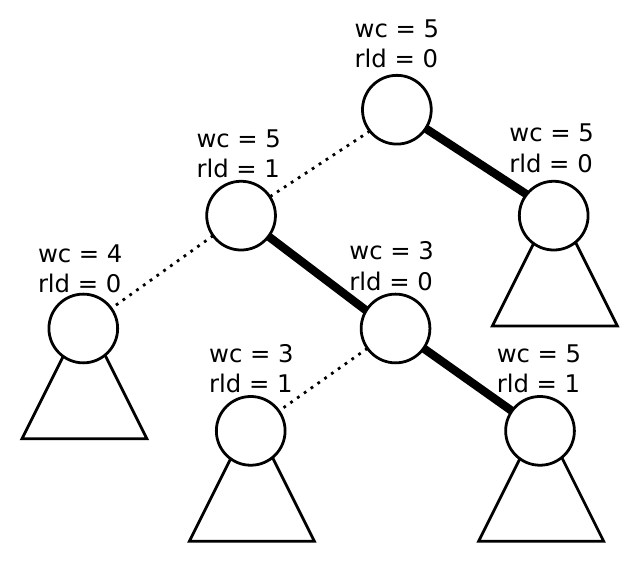}
    \caption{Example of weight classes and restricted light depths in a tree.
        The dotted and solid lines correspond to light and heavy edges
        respectively.}
    \label{fig:rld}
\end{figure}

When assigning the interval $I(u)$, we will split it into a sub-interval for
each weight class $i\le wc(u)$.

We will now show some properties related to weight classes and restricted light
depth. We will use the definitions of $\gamma(u)$ and $wtop(u)$ as described in
Definitions~\ref{def:wc} and \ref{def:rld}.

\begin{lemma}
    \label{rldBound}
    Let $u$ be any node, then $rld(u) \le 2\gamma(u)+1$.
\end{lemma}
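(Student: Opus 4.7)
The plan is to list the apex nodes encountered on the path from $u$ up to $wtop(u)$ and combine the standard geometric growth of subtree sizes along light edges with the weight-class cap $wc(\cdot) \le wc(u)$ that $wtop(u)$ enforces on this path. Write $a_1, a_2, \ldots, a_r$ for these apex nodes, ordered from deepest to shallowest. Each light edge on the path is precisely the parent edge of exactly one such $a_j$, so $rld(u) = r$; the case $r=0$ is immediate, so assume $r \ge 1$.

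First I would establish the doubling inequality $|T_{a_{j+1}}| \ge 2|T_{a_j}|$ for $1 \le j < r$. Because $a_j$ is apex, the parent of $a_j$ has a heavy child whose subtree has size at least $|T_{a_j}|$, giving $|T_{parent(a_j)}| \ge 2|T_{a_j}| + 1$; monotonicity of subtree size up the tree lifts this to $a_{j+1}$ since $a_{j+1}$ is an ancestor of $parent(a_j)$. Iterating and using $|T_{a_1}| \ge 1$ yields $|T_{a_r}| \ge 2^{r-1}$, so that $\gamma(a_r) = \lfloor \lg |T_{a_r}| \rfloor \ge r-1$ (here $a_r$ is an apex node, so $\gamma$ is computed from $|T_{a_r}|$ rather than from $|T_{a_r}^\ell|$).

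Next I would apply the defining property of $wtop(u)$: $a_r$ lies on the path from $u$ to $wtop(u)$, so $wc(a_r) \le wc(u)$, i.e.\ $\lfloor \lg \gamma(a_r) \rfloor \le \lfloor \lg \gamma(u) \rfloor$. A short calculation (using $2^{\lfloor \lg \gamma(u) \rfloor + 1} \le 2\gamma(u)$) yields $\gamma(a_r) \le 2\gamma(u) - 1$, and combining this with $\gamma(a_r) \ge r-1$ gives $r \le 2\gamma(u) \le 2\gamma(u) + 1$, which is the claimed bound. The only mild obstacle is keeping track of the paper's convention $\lg x = \max(1, \log_2 x)$ and the small-$\gamma$ regime (together with boundary cases where $u$ is itself apex or $wtop(u) = u$), but these are routine bookkeeping and do not affect the two-step core of the argument.
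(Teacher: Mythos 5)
Your argument is correct and follows essentially the same route as the paper's proof: identify the shallowest apex node on the path to $wtop(u)$ (your $a_r$, the paper's $v$), lower-bound its $\gamma$-value by $rld(u)-1$ via subtree doubling across light edges, and upper-bound it by $2\gamma(u)$ using $wc(a_r)\le wc(u)$. The only cosmetic difference is that you iterate the doubling explicitly, whereas the paper gets the same bound $\abs{T_v}\ge 2^{rld(u)}-1$ by citing Lemma~\ref{lightHeightLemma} applied to the light height of $v$.
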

%\fullversion{
\begin{proof}
    Let $v$ be the apex node on the path from $v$ to $wtop(u)$ with the smallest
	depth. (If no such node exist $rld(u) = 0$ and the result is trivial.)
    We note that $v$ must have light height $\ge rld(u)-1$, so by
    Lemma~\ref{lightHeightLemma} $\abs{T_v} \ge 2^{rld(u)}-1$ and therefore
    $\gamma(v) \ge rld(u)-1$. So
    \[
        2\gamma(u) \ge 2^{wc(u)+1} \ge 2^{wc(v)+1} \ge
        \gamma(v) \ge rld(u)-1
    \]
    which finishes the proof.
\end{proof}
%}
\begin{lemma}
    \label{rldToApexAncestor}
    Let $u$ be an ancestor of $v$ such that $u$ is an apex node and $wc(u) = wc(v)$.
    Let $k$ be the number of light edges on the path from $u$ to $v$. Then
    $rld(v) = rld(u) + k$.
\end{lemma}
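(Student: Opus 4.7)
The plan is to show that $wtop(v)=wtop(u)$; once this is established, the path from $v$ to $wtop(v)$ decomposes into the path from $v$ to $u$ and the path from $u$ to $wtop(u)$, and counting light edges gives $rld(v)=k+rld(u)$ immediately.

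The first step, which I expect to be the key fact, is a monotonicity statement: if $u$ is apex and $w$ is any descendant of $u$, then $wc(w)\le wc(u)$. This is where the hypothesis that $u$ is apex really matters. Since $u$ is apex we have $\gamma(u)=\lfloor \lg |T_u|\rfloor$, i.e.\ the ``large'' case of \cref{def:wc}. For any descendant $w$, the subtree $T_w$ is contained in $T_u$, and $|T_w^\ell|\le |T_w|$, so regardless of whether $w$ is apex, $\gamma(w)\le \lfloor \lg |T_w|\rfloor \le \lfloor \lg |T_u|\rfloor = \gamma(u)$, and therefore $wc(w)\le wc(u)$. Applied to every node $w$ on the path from $u$ to $v$, this gives $wc(w)\le wc(u)=wc(v)$.

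With this monotonicity in hand, I would prove $wtop(v)=wtop(u)$ by two inequalities on depths. For ``$wtop(v)$ is at least as high as $wtop(u)$'': the path from $v$ up to $wtop(u)$ splits into the part from $v$ to $u$ (weight classes $\le wc(u)=wc(v)$ by the step above) and the part from $u$ to $wtop(u)$ (weight classes $\le wc(u)=wc(v)$ by definition of $wtop(u)$), so by the definition of $wtop(v)$ its depth is at most that of $wtop(u)$. For the other direction, suppose for contradiction that $wtop(v)$ is a proper ancestor of $wtop(u)$. Then $wtop(u)$ is not the root, so by the minimality of $wtop(u)$ the node $w:=\operatorname{parent}(wtop(u))$ satisfies $wc(w)>wc(u)=wc(v)$. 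But $w$ is an ancestor of $u$ and hence of $v$, and it lies on the path from $v$ to $wtop(v)$, so its weight class must be $\le wc(v)$, a contradiction. Hence $wtop(v)=wtop(u)$.

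Finally, since $u$ lies on the path from $v$ to $wtop(v)=wtop(u)$, the light edges of that path are exactly the $k$ light edges on the $v$--$u$ segment together with the $rld(u)$ light edges on the $u$--$wtop(u)$ segment, yielding $rld(v)=k+rld(u)$. The main obstacle is really just the apex monotonicity: without $u$ being apex one could have $\gamma(u)=\lfloor \lg |T_u^\ell|\rfloor$, which need not dominate $\gamma$ on heavy descendants, and the clean equality $wtop(v)=wtop(u)$ would fail.
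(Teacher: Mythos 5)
Your proof is correct and takes essentially the same route as the paper's: show that every node in the subtree of the apex node $u$ has weight class at most $wc(u)=wc(v)$, conclude $wtop(v)=wtop(u)$, and count light edges along the concatenated path. You merely make explicit two steps the paper leaves implicit, namely the justification that $\gamma$ (and hence $wc$) cannot increase when passing to descendants of an apex node, and the minimality argument ruling out that $wtop(v)$ is a proper ancestor of $wtop(u)$.
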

%\fullversion{
\begin{proof}
    Any node in $u$'s subtree must have weight class $\le wc(u)$
    since $u$ is an apex node. Since $wc(u) = wc(v)$ every node on the path
    from $v$ to $wtop(u)$ must have weight class $\le wc(v)$.
    Thus $wtop(v) = wtop(u)$ and there are $rld(u) + k$ light edges on the path from
    $v$ to $wtop(u)$, i.e. $rld(v) = rld(u) + k$.
\end{proof}
%}
\begin{lemma}
    \label{rldToParent}
    Let $u$ be the parent of an apex node $v$. If $wc(v) < wc(u)$ then $rld(v) = 0$,
    and if $wc(v) = wc(u)$ then $rld(v) = rld(u)+1$.
\end{lemma}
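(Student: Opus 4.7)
The plan is to unpack the definitions of $wtop$ and $rld$ and handle the two cases separately. In both cases, the key observation is that $v$ being an apex node means the edge $(u,v)$ is light, so every occurrence of $u$ on a path from $v$ upward contributes one light edge and extends $v$'s path by that edge.

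For the first case $wc(v) < wc(u)$, I would argue that $wtop(v)=v$. Indeed, by \Cref{def:rld}, $wtop(v)$ is an ancestor of $v$ all of whose proper ancestors on the path from $v$ have weight class $\le wc(v)$. The only candidate above $v$ is its parent $u$, but $wc(u) > wc(v)$ by assumption, so $u$ cannot lie on the path from $v$ to $wtop(v)$. Hence $wtop(v) = v$ and the path contains no light edges, giving $rld(v) = 0$.

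For the second case $wc(v) = wc(u)$, I would show $wtop(v) = wtop(u)$ by two-sided inclusion. First, every node on the path from $v$ up to $wtop(u)$ has weight class $\le wc(u) = wc(v)$: the endpoints $v$ and $u$ satisfy this trivially, and the nodes strictly between $u$ and $wtop(u)$ satisfy it by the definition of $wtop(u)$. So $wtop(v)$ is an ancestor of (or equal to) $wtop(u)$. Conversely, if $wtop(v)$ were a strict ancestor of $wtop(u)$, then the parent of $wtop(u)$ would lie on the path from $v$ to $wtop(v)$ and would have weight class $\le wc(v) = wc(u)$, contradicting the definition of $wtop(u)$ as the \emph{smallest-depth} such ancestor of $u$. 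Therefore $wtop(v) = wtop(u)$.

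Finally, the path from $v$ to $wtop(v) = wtop(u)$ decomposes into the edge $(v,u)$ followed by the path from $u$ to $wtop(u)$. The edge $(v,u)$ is light because $v$ is an apex node, contributing one light edge; the remainder contributes exactly $rld(u)$ light edges by definition. Thus $rld(v) = rld(u) + 1$. There is no real technical obstacle here; the only care needed is in distinguishing ``the path up from $v$ could stop immediately at $v$ itself'' in the first case from ``the path is forced to extend through $u$'' in the second, which is exactly what the weight-class condition enforces.
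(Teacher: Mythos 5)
Your proof is correct, but it takes a different route from the paper's. The paper proves the $wc(v)=wc(u)$ case by passing to the apex node $w$ at the top of $u$'s heavy path and case-splitting on whether $wc(w)=wc(u)$ or $wc(w)>wc(u)$: in the first subcase it invokes \Cref{rldToApexAncestor} twice (for $u$ and for $v$, with $0$ and $1$ light edges below $w$ respectively), and in the second it observes $rld(u)=0$ and $rld(v)=1$ directly. You instead argue straight from \Cref{def:rld} that $wtop(v)=wtop(u)$ by a two-sided comparison: $wtop(u)$ is a valid candidate for $wtop(v)$ because the only new node on the path, namely $u$ itself, has weight class $wc(u)=wc(v)$; and $wtop(v)$ cannot sit strictly above $wtop(u)$, since otherwise the parent of $wtop(u)$ would witness that $wtop(u)$ was not of smallest depth. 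This is essentially the same style of argument the paper uses \emph{inside} the proof of \Cref{rldToApexAncestor} (where the minimality direction is left implicit), so your version is more self-contained and avoids the detour through the heavy-path apex, at the cost of not reusing the already-established lemma. Your handling of the $wc(v)<wc(u)$ case ($wtop(v)=v$, hence $rld(v)=0$) supplies the justification that the paper states without proof. Both arguments are sound.
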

%\fullversion{
\begin{proof}
    If $wc(v) < wc(u)$ then $v$ has restricted light depth $0$ so assume that
    $wc(u) = wc(v)$. Let $w$ be the apex node on $u$'s heavy path (possibly
    $u$ itself). Then first assume
	that $wc(w) = wc(u)$. By Lemma~\ref{rldToApexAncestor} $rld(u) = rld(w)$ and
	$rld(v) = rld(w)+1$ and the claim is true. Now assume that $wc(w) > wc(u)$.
	Then $rld(u) = 0$ and $rld(v) = 1$ and the claim is true as well. Since
	$wc(w) < wc(u)$ is impossible the proof is finished.
\end{proof}
%}

\subsection{Weight assignment}
We will now see how to assign path weights and light weights to the nodes.
The idea is to consider an entire heavy path as a ``recursive caterpillar''
and use ideas similar to those of Section~\ref{sec:caterpillar}. Consider any heavy
path $p = (u_1,u_2,\ldots,u_{\abs{p}})$ in order where $u_1$ is the apex node.
For each $u \in p$ we do the following:
\begin{enumerate}
    \item For each light-child $v$ of $u$ we recursively calculate $pw(v)$.
    \item For every weight class $i \le wc(u)$, let $b_i$ be the sum of $pw(v)$
          for all light children $v$ of $u$ with weight class $wc(v) = i$.
    \item We use the convention that $a_0(u) = 0$, and for $i=1,\ldots,wc(u)$ we
          let $a_i(u)$ be a  $\left(1 +
          \frac{1}{(\gamma(u))^3}\right)$-approximation of $a_{i-1}(u) +
          b_i(u)$.
    \item We then define the light weight of $u$ as $lw(u) = 1 + a_{wc(u)}(u)$.
\end{enumerate}

For each $i = 1,2,\ldots,\abs{p}$ we let $k'(u_i) = \gamma(u_i) - \ceil{2 \lg \gamma(u_i)} + 1$.
We choose $k(u_1),\ldots,k(u_{\abs{p}})$ such that $k(u_i) \ge k'(u_i)$
for every $i = 1,\ldots,\abs{p}$ and $k(u_i) - k(u_{i+1}) \in \set{-1,0,1}$
for all $i = 1,\ldots,\abs{p}-1$. We do this in the same manner as in
Section~\ref{sec:caterpillar}
when we constructed the labeling scheme for the caterpillar, see Figure~\ref{fig:betas}.

The path weight of $u_1$ is defined as
$pw(u_1) = \sum_{i=1}^{\abs{p}} \left ( lw(u_i) + 2^{k(u_i)} - 1 \right )$. By
this definition, the path weight of a leaf apex node is $1$.

Pseudocode for the function $\FuncSty{Assign-Weight}$ is available in
Algorithm~\ref{alg:assign_weight}.

\begin{algorithm}
    \caption{\FuncSty{Assign-Weight}}
    \label{alg:assign_weight}
    \DontPrintSemicolon
    \SetKwInOut{Input}{input}\SetKwInOut{Output}{output}
    \Input{Heavy path $p = (u_1,\ldots, u_t)$ represented by $u_1$.}
    \Output{path weight of $p$.}
    \BlankLine
    \For{$i=1\to t$}{
        $a_0(u_i)\leftarrow 0$\;
        \For{$j = 1\to wc(u_i)$}{\label{line:for_weight}
            $b_j \leftarrow 0$\;
            \For{$v\in \{w\in \FuncSty{Light-Children}(u_i)\mid wc(w) = j\}$
            sorted by subtree size}{
                $b_j \leftarrow b_j + \FuncSty{Assign-Weight}(v)$\;
            }
            $a_j(u_i) = \FuncSty{Approx}(a_{j-1}(u_i) + b_j,
            \gamma(u_i)^{-3})$\label{line:approx}\;
        }
        $lw(u_i) = 1 + a_{wc(u_i)}(u_i)$\;
    }
    $k(u_1) = \gamma(u_1) - \ceil{2\lg\gamma(u_1)}+1$\;
    \For{$i=2\to t$}{
        $k(u_i) = \max(\gamma(u_i) - \ceil{2\lg\gamma(u_i)}+1, k(u_{i-1}) - 1)$\;
    }
    \For{$i=t-1\to 1$}{
        $k(u_i) = \max(k(u_i), k(u_{i+1})-1)$\;
    }
    $pw(u_1)\leftarrow 0$\;
    \For{$i=1\to t$}{
        $pw(u_1)\leftarrow pw(u_1) + lw(u_i)+ 2^{k(u_i)} - 1$\;
    }
    \Return $pw(u_1)$\;
\end{algorithm}

%\subsection{Weight analysis}
The main technical part of this paper is to show that calling
$\FuncSty{Assign-Weight}$ ensures that $pw(u) = O(|T_u|)$ for all apex
nodes, $u\in T$. This is used to show that the maximum id assigned by our
labeling scheme is $O(n)$ and thus takes $\lg n + O(1)$ bits to store.
Intuitively this is the case since the quality of the approximation used in a
node $u$ improves as the size of $u$'s subtree increases. Specifically, we will
use the following lemma, which is proved in Section~\ref{sec:weight_proof}.

%\Cref{pwlemma} below is the main technical proof of this paper. The goal is to prove that the
%maximal id size is $O(n)$, and we do this by proving that the path weight of a node
%is linear in the size of the subtree. This is (intuitively) the case since the quality
%of the approximation used in a node $u$ gets better as the size of the subtree $\abs{T_u}$
%increases. The key is then to prove that the quality of the approximation increases
%quickly enough.
\begin{lemma}
    \label{pwlemma}
    Let $T$ be a tree rooted in $r$ and let $u\in T$ be any apex node with
    light height $x$. After calling $\FuncSty{Assign-Weight}(r)$ it holds that:
%    For any apex node $u$ with light height $x$:
    \[
        pw(u) \le 3\abs{T_u} \cdot \prod_{i=1}^{x} \left(1 + \frac{6}{i^2}\right)
    \]
    Furthermore, for any node $v\in T$ it holds that
    \begin{align}
    lw(v) \le 3\abs{T^\ell_{v}} \prod_{j=1}^{z} \left(1 +
    \frac{6}{j^2}\right) \cdot
    \left ( 1 + \frac{2}{(z+1)^2} \right )\ ,
    \end{align}
    where $z$ is the maximum light height of any light child of $v$.
\end{lemma}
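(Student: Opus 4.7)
The plan is to prove both bounds by simultaneous structural induction, ordered by the light height of the subtree involved; I first establish the light weight bound for a single node $v$ and then use it to bound $pw$ for an apex node $u$ by summing over its heavy path.

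For the light weight bound, fix $v$ with light children $w_1,\ldots,w_m$ (each an apex node with light height $h_j\le z$). Since $a_j(v)=\FuncSty{Approx}(a_{j-1}(v)+b_j,\gamma(v)^{-3})$, unrolling the recursion gives
$a_{wc(v)}(v) \le (1+\gamma(v)^{-3})^{wc(v)}\sum_{j=1}^{m} pw(w_j).$
Applying the inductive path weight bound to each $w_j$ and using $h_j\le z$ yields $pw(w_j)\le 3|T_{w_j}|\prod_{i=1}^{z}(1+6/i^2)$; summing and using $\sum_j|T_{w_j}|\le|T_v^\ell|$ produces
$lw(v)\le 1+3|T_v^\ell|\,(1+\gamma(v)^{-3})^{wc(v)}\prod_{i=1}^{z}\!\left(1+\tfrac{6}{i^2}\right).$
What remains is to show $(1+\gamma(v)^{-3})^{wc(v)}\le 1+2/(z+1)^2$. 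By Lemma~\ref{lightHeightLemma} the light child attaining $z$ has $|T_{w_j}|\ge 2^{z+1}-1$, so $|T_v^\ell|\ge 2^{z+1}$ and hence $\gamma(v)\ge z+1$; together with $wc(v)\le\lg\gamma(v)$, a routine $(1+x)^n\le 1+2nx$ estimate suffices. The additive $+1$ is absorbed into $3|T_v^\ell|$.

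For the path weight bound, let $p=(u_1,\ldots,u_t)$ be the heavy path of the apex node $u=u_1$ (with light height $x$) and split
$pw(u)=\sum_{i=1}^{t}lw(u_i)+\sum_{i=1}^{t}(2^{k(u_i)}-1).$
For the first sum, apply the just-proved light weight bound at each $u_i$ with its local parameter $z_i\le x-1$, use the telescoping inequality $(1+2/(z_i+1)^2)\prod_{j=1}^{z_i}(1+6/j^2)\le\prod_{j=1}^{z_i+1}(1+6/j^2)\le\prod_{j=1}^{x}(1+6/j^2)$, and observe that the heavy decomposition gives $\sum_i|T_{u_i}^\ell|=|T_u|$, yielding $\sum_i lw(u_i)\le 3|T_u|\prod_{j=1}^{x}(1+6/j^2)$. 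For the second sum, replay the caterpillar calculation of Lemma~\ref{lem:caterN}: the conditions $k(u_i)\ge k'(u_i)$ and $|k(u_i)-k(u_{i+1})|\le 1$ force $k(u_i)\le\max_j(k'(u_j)-|i-j|)$, giving $\sum_i 2^{k(u_i)}\le 3\sum_j 2^{k'(u_j)}$. Since $2^{k'(u_j)}\le 2\cdot 2^{\gamma(u_j)}/\gamma(u_j)^2\le 2|T_{u_j}^\ell|/\gamma(u_j)^2$ (with $|T_{u_1}|$ in place of $|T_{u_1}^\ell|$ for $j=1$), this overhead is $O(|T_u|/\gamma(u)^2)$ and fits inside the slack between $3|T_u|$ and $3|T_u|\prod_{j=1}^{x}(1+6/j^2)$.

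The main obstacle is the careful bookkeeping of constants: the cubing in the approximation factor $\gamma^{-3}$, the choice of $\lceil 2\lg\gamma\rceil$ in $k'$, and the factor $6/i^2$ in the final product are tuned together so that the two distinct sources of slack, namely the multiplicative $\FuncSty{Approx}$ error accrued during weight assignment and the additive $2^{k(u_i)}$ overhead incurred by the id assignment, both fit inside one convergent product $\prod(1+6/i^2)=O(1)$. The most delicate step is verifying the telescoping $(1+2/(z_i+1)^2)\prod_{j=1}^{z_i}(1+6/j^2)\le\prod_{j=1}^{z_i+1}(1+6/j^2)$, which is exactly the inequality $2\le 6$ applied at the $(z_i+1)$st factor and is the reason the statement has $6$ in the product but only $2$ in the trailing correction term.
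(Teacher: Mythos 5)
Your overall strategy coincides with the paper's: induction on light height, unrolling the \FuncSty{Approx} recursion to get the $(1+\gamma(v)^{-3})^{wc(v)}\le 1+2/(z+1)^2$ factor for the light weight bound, and a caterpillar-style bound $\sum_i 2^{k(u_i)}\le 3\sum_j 2^{k'(u_j)}=O\bigl(\sum_j 2^{\gamma(u_j)}/\gamma(u_j)^2\bigr)$ for the id overhead. The light weight half of your argument is essentially the paper's proof and is fine.

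The gap is in the final assembly of the path weight bound. You first spend the \emph{entire} budget on the $lw$ sum, telescoping $\bigl(1+2/(z_i+1)^2\bigr)\prod_{j=1}^{z_i}(1+6/j^2)\le\prod_{j=1}^{z_i+1}(1+6/j^2)\le\prod_{j=1}^{x}(1+6/j^2)$ and concluding $\sum_i lw(u_i)\le 3\abs{T_u}\prod_{j=1}^{x}(1+6/j^2)$, and only then claim the $\sum_i(2^{k(u_i)}-1)$ overhead ``fits inside the slack between $3\abs{T_u}$ and $3\abs{T_u}\prod_{j=1}^{x}(1+6/j^2)$.'' That slack has already been consumed: if every $u_i$ has $z_i=x-1$, your bound on the first sum is already the full target and nothing absorbs the strictly positive overhead $12\sum_i\abs{T_{u_i}^\ell}/(z_i+1)^2$. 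Your diagnosis of the constants is also off: the reason the product has $6$ while the trailing correction has $2$ is \emph{not} that ``$2\le 6$'' makes the telescoping work (that would work just as well with $2$ everywhere); it is that the reserve of $6-2=4$ per factor is exactly what absorbs the id overhead. The paper combines the two contributions \emph{per index $i$ before} telescoping: $3\abs{T_{u_i}^\ell}\prod_{j=1}^{z_i}(1+6/j^2)\bigl(1+\tfrac{2}{(z_i+1)^2}\bigr)+\tfrac{12\abs{T_{u_i}^\ell}}{(z_i+1)^2}\le 3\abs{T_{u_i}^\ell}\prod_{j=1}^{z_i}(1+6/j^2)\bigl(1+\tfrac{6}{(z_i+1)^2}\bigr)$, which reduces to $12\le 12\prod_{j=1}^{z_i}(1+6/j^2)$, and only then telescopes to $\prod_{j=1}^{z_i+1}\le\prod_{j=1}^{x}$ and sums $\sum_i\abs{T_{u_i}^\ell}=\abs{T_{u_1}}$. (You would also need to replace your ``$O(\abs{T_u}/\gamma(u)^2)$'' by the explicit bound $12\sum_i\abs{T_{u_i}^\ell}/(z_i+1)^2$, handling $i=1$ via $\gamma(u_1)\ge x$ and $\abs{T_{u_1}}=\sum_i\abs{T_{u_i}^\ell}$, since the lemma asserts exact constants and an unquantified $O(\cdot)$ cannot be absorbed.) With this reordering your argument closes; as written, it does not.
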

\begin{corollary}
    \label{pwandlwlinear}
    Let $T$ be a tree rooted in $r$ and let $u\in T$ be any apex node and
    $v\in T$ be any node. After calling $\FuncSty{Assign-Weight}(r)$ it holds that:
%    For any apex node $u$ and any node $v$
    \[
        pw(u) \le 3 e^{\pi^2} \abs{T_u}, \quad
        lw(v) \le 3 e^{\pi^2} \abs{T_v^\ell}
    \]
\end{corollary}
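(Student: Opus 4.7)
The plan is a direct numerical estimate: convert both bounds from Lemma~\ref{pwlemma} into the stated linear bounds by controlling the infinite product $\prod_{i\ge 1}(1 + 6/i^2)$.

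First I would use the elementary inequality $1 + x \le e^x$ (valid for all real $x$) termwise, to obtain
\[
    \prod_{i=1}^{x} \left(1 + \frac{6}{i^2}\right)
    \;\le\; \exp\!\left(\sum_{i=1}^{x} \frac{6}{i^2}\right)
    \;\le\; \exp\!\left(6\sum_{i=1}^{\infty} \frac{1}{i^2}\right)
    \;=\; \exp\!\left(6 \cdot \tfrac{\pi^2}{6}\right)
    \;=\; e^{\pi^2},
\]
where the Basel identity $\sum_{i\ge 1} i^{-2} = \pi^2/6$ is applied. Plugging this into the first inequality of Lemma~\ref{pwlemma} immediately gives $pw(u) \le 3 e^{\pi^2} |T_u|$ for every apex node $u$.

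For the light-weight bound I would absorb the trailing factor $1 + 2/(z+1)^2$ into the product. Since $2 \le 6$, we have $1 + 2/(z+1)^2 \le 1 + 6/(z+1)^2$, so
\[
    \prod_{j=1}^{z} \left(1 + \frac{6}{j^2}\right) \cdot \left(1 + \frac{2}{(z+1)^2}\right)
    \;\le\; \prod_{j=1}^{z+1} \left(1 + \frac{6}{j^2}\right)
    \;\le\; e^{\pi^2},
\]
by the same estimate as above. Combined with the second inequality of Lemma~\ref{pwlemma}, this yields $lw(v) \le 3 e^{\pi^2} |T_v^\ell|$ for every node $v$.

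There is no real obstacle here; the only point requiring any care is the absorption of the extra $(1 + 2/(z+1)^2)$ factor in the $lw$ bound, and that is resolved by the trivial inequality $2 \le 6$. Everything else reduces to the Basel sum and the exponential bound on $1+x$.
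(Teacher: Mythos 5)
Your proof is correct and follows essentially the same route as the paper: bound the finite product by $\exp\bigl(6\sum_{i\ge 1} i^{-2}\bigr) = e^{\pi^2}$ via $1+x\le e^x$ and the Basel sum. Your explicit absorption of the factor $1+2/(z+1)^2$ into the product as its $(z+1)$st term is a nice touch where the paper only says the $lw$ case is ``similar.''
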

\begin{proof}
Let $u$ be an apex node with light height $x$. Then:
\begin{align*}
    pw(u) &\le 3\abs{T_u} \cdot \prod_{i=1}^{x} \left(1 + \frac{6}{i^2}\right) \\
    &\le 3 \abs{T_u} \cdot \exp \left ( \sum_{i=1}^x \frac{6}{i^2} \right ) \\
    &\le 3 \abs{T_u} \cdot \exp \left ( \sum_{i=1}^\infty \frac{6}{i^2} \right
    ) \\
    &= 3 e^{\pi^2} \abs{T_u}
\end{align*}
The proof for $lw(v)$ is similar.
%n the same manner using \eqref{eq:lwbound} from the proof of
%Lemma~\ref{pwlemma}.
%For the second part we note that from \eqref{eq:lwbound} in the proof of \Cref{pwlemma}
%we see that for any node $u$, where every light child of $u$ has light height $\le z$:
%\[
%    lw(u) \le
%    3\abs{T^\ell_{u}} \prod_{j=1}^{z} \left(1 + \frac{6}{j^2}\right) \cdot
%    \left ( 1 + \frac{2}{(z+1)^2} \right )
%\]
%Hence we can upper bound $lw(u)$ in the same manner as we upper bounded
%$pw(u)$.
\end{proof}

\subsection{Id assignment}

We create a procedure $\FuncSty{Assign-Id}(u, s)$ and use it to assign ids to
the nodes in the tree. The procedure takes two parameters: $u$, the node to
which we want to assign the id, and $s$, a lower bound on the id to be assigned.
The function ensures that $id(u) \in \left [s, s+2^{k(u)}-1 \right ]$ has
at least $k(u)$ trailing zero bits and also
assigns an id to every node in $u$'s subtree recursively. We assign ids to
every node in the tree by calling $\FuncSty{Assign-Id}(r,0)$, where $r$ is the
root of the tree. The procedure goes as follows:

%\textbf{Definition of \texttt{assign-id}$(u, s)$:}
\begin{enumerate}
    \item We let $id(u)$ be the unique integer in $\left [s, s+2^{k(u)}-1 \right ]$
          which has at least $k(u)$ trailing zeros in its binary representation.
    \item We let $C_1,\ldots,C_{wc(u)}$ denote the partition of $u$'s light children
          such that every child $v$ with weight class $wc(v) = i$ is contained
          in $C_i$.
    \item Fix $i$ in increasing order. We assign the ids to the nodes in $C_i$ in the following manner.
          For convenience say that $C_i = \set{v_1,\ldots,v_{\abs{C_i}}}$. We then
          let $t_1 = id(u) + a_{i-1}(u) + 1$. For each $j = 1,\ldots,\abs{C_i}$ we
          call $\FuncSty{Assign-Id}(v_j,t_j)$ and set $t_{j+1} = t_j + pw(v_j)$.
    \item Lastly, for the heavy child $v$ of $u$ we call
        $\FuncSty{Assign-Id}(v,id(u)+lw(u))$.
\end{enumerate}

By the above definition we see that for any node $u$ and any node $v\in
T_u^\ell$ we have $id(v)\in (id(u) + a_{wc(v)-1}(u), id(u)+a_{wc(v)}(u)]$. We also
have that $id(u) = id(v) \Leftrightarrow u = v$. Finally, for any two intervals
$I(u),I(v)$ either one is contained in the other or they are disjoint.

Pseudocode for the procedure $\FuncSty{Assign-Id}$ can be found in
Algorithm~\ref{alg:assign_id}.

\begin{algorithm}
    \caption{\FuncSty{Assign-Id}}
    \label{alg:assign_id}
    \DontPrintSemicolon
    \SetKwInOut{Input}{input}\SetKwInOut{Output}{output}
    \Input{Node $u$, First available id $s$.}
    \BlankLine
    $id(u) \leftarrow $ unique integer in $[s, s + 2^{k(u)} - 1]$ with at least
    $k(u)$ trailing zeroes in binary representation.\label{line:skip_0s}\;
    \For{$j = 1\to wc(u)$}{
        $t \leftarrow id(u) + a_{j-1}(u) + 1$\;
        \For{$v\in \{w\in \FuncSty{Light-Children}(u)\mid wc(w) = j\}$
        sorted by subtree size}{
            $\FuncSty{Assign-Id}(v, t)$\;
            $t \leftarrow t + pw(v)$\;
        }
    }
    $\FuncSty{Assign-Id}(heavy(u), id(u) + lw(u))$\;
\end{algorithm}

\subsection{Encoding of labels}
We are now ready to describe the actual labels. Let $u$ be a node. Let
$apex(u) \in \set{0,1}$ and $leaf(u) \in \set{0,1}$ be $1$ if $u$ is an apex
node and a leaf respectively. If $u$ is not a leaf, let $v$ be the heavy child
of $u$ and let $next(u) \in \set{-1,0,1}$ be such that $k(v) = k(u) + next(u)$.
If $u$ is a leaf let $next(u) = 0$. We identify $next(u)$ with the bit string
of size two that is (00) if $next(u) = 0$, (01) if $next(u) = 1$, and (11) if
$next(u) = -1$. We let $aux(u)$ denote the following bit string:
\[
    aux(u) =
    [k(u)]_\gamma \circ
    [wc(u)]_\gamma \circ
    [rld(u)]_\gamma \circ
    apex(u) \circ
    leaf(u) \circ
    next(u)
\]
For each $i=1,2,\ldots,wc(u)$ let $s_i$ be the bit string corresponding to the
$\left (1+ \frac{1}{(\gamma(u))^3} \right )$-approximation $a_i(u)$ as described in
Lemma~\ref{lem:approximation}.
Let $M = \max_i \abs{s_i}$ be the length of the longest of the bit strings and
let $r_i = 0^{M-\abs{s_i}} \circ s_i$. Then $r_1,\ldots,r_{wc(u)}$ have length
$M$. The table, $table(u)$, from which we can decode any of $a_1(u),\ldots,a_{wc(u)}(u)$
in $O(1)$ time is defined as:
\[
    table(u) =
    [M]_\gamma \circ
    r_1 \circ
    \ldots \circ
    r_{wc(u)}
\]
The label of $u$ is then defined as:
\[
    \ell(u) =
    aux(u) \circ
    table(u) \circ
    wlsb(id(u), k(u))
\]

Figure~\ref{fig:table} illustrates how the interval $I(u)$ is split into a part
for each $i\le wc(u)$. in $table(u)$
\begin{figure}[htbp]
    \centering
    \includegraphics[width=.7\textwidth]{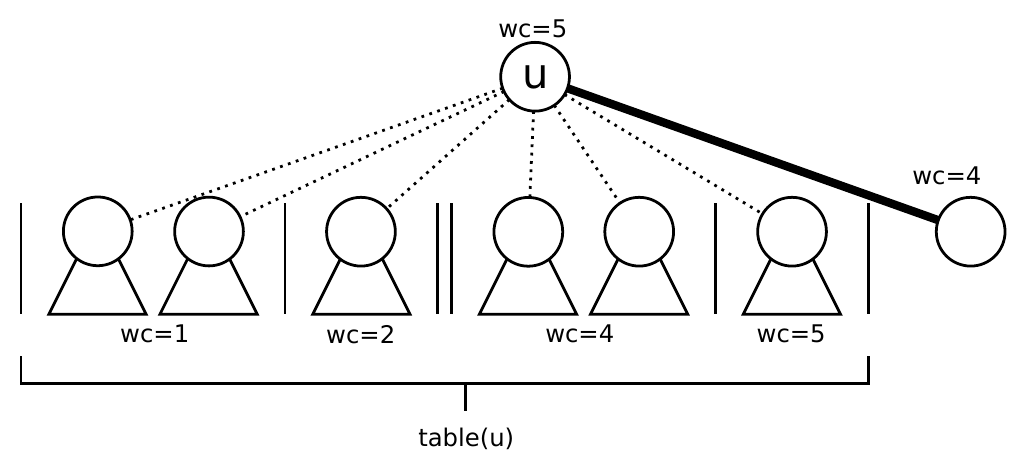}
    \caption{Illustration of the $table(u)$ structure, partitioning $u$'s
    assigned interval into a part for each smaller weight class.}
    \label{fig:table}
\end{figure}

\paragraph{Label size}
%\textbf{Length of labels:}
Since $rld(u) = O\!\left(\gamma(u)\right)$ by Lemma~\ref{rldBound}
we see that the length of $aux(u)$ is upper bounded by:
\[
    \abs{aux(u)} \le
    2\ceil{\lg k(u)} + O(\lg \gamma(u)) =
    O(\lg k(u))
\]
where we use that $\lg \gamma(u) = O(\lg k(u))$, which is true since
$k(u) \ge k'(u) = \gamma(u) - 2\ceil{\lg \gamma(u)}+1$.

By Corollary~\ref{pwandlwlinear} $lw(u) = O\!\left(\abs{T_u^\ell}\right)$ and hence for every $i=1,\ldots,wc(u)$:
$\lg \lg a_i(u) \le \lg \gamma(u) + O(1)$.
By Lemma~\ref{lem:approximation} we see that $M = O(\lg \gamma(u))$ where $M$ is the
variable used to define $table(u)$. Hence, the length of $table(u)$ is at
most $O\!\left((\lg \gamma(u))^2\right) = O\!\left((\lg k(u))^2\right)$. Furthermore,
the length of $wlsb(id(u),k(u))$ is at most $\ceil{\lg id(u)} - k(u) \le
\lg n - k(u) + O(1)$.
Summarizing, the total label size is upper bounded by:
\[
    \abs{\ell(u)} \le
    \lg n - k(u) + O\!\left((\lg k(u))^2\right) \le
    \lg n + O(1)
\]

\subsection{Decoding}

We will now see how we from two labels $\ell(u), \ell(v)$ of nodes $u,v \in T$ can
deduce whether $u$ is adjacent to $v$. Lemma~\ref{checkParent} below contain necessary and
sufficient conditions for whether $u$ is a parent of $v$.
\begin{lemma}
    \label{checkParent}
    Given two nodes $u,v$: $u$ is a parent of $v$ if and only if either:
    \begin{enumerate}[label=1.\arabic*]
        \item $v$ is a heavy child (i.e. not an apex node).
        \item $u$ is not a leaf.
        \item $id(v)$ is the first number greater than $id(u) + lw(u)$ with at
            least $k(u) + next(u)$ trailing zeroes in its binary
            representation.
    \end{enumerate}
    \vspace{-5pt}
    or:
    \begin{enumerate}[label=2.\arabic*]
        \item $v$ is an apex node.
        \item $wc(v)\le wc(u)$.
        \item $id(v) \in (id(u) + a_{wc(v)-1}(u), id(u) + a_{wc(v)}(u)]$.
        \item If $wc(v) < wc(u)$ then $rld(v) = 0$ else (if $wc(v) = wc(u)$)
            then $rld(v) = rld(u) + 1$.
    \end{enumerate}

%    \begin{quote}
%    (1a) $v$ is a heavy child.
%    (1b) $u$ is not a leaf.
%    (1c) $id(v)$ is the first number $\ge id(u)+lw(u)$ with at least
%         $k(u) + next(u)$ trailing zeros in its binary representation.
%    \end{quote}
%
%    or
%    \begin{quote}
%    (2a) $v$ is an apex node.
%    (2b) $wc(v) \le wc(u)$.
%    (2c) $id(v) \in \left ( id(u) + a_{wc(v)-1}(u), id(u) + a_{wc(v)}(u) \right ]$.
%    (2d) If $wc(v) < wc(u)$: $rld(v) = 0$.
%    (2e) If $wc(v) = wc(u)$: $rld(v) = rld(u)+1$.
%    \end{quote}
\end{lemma}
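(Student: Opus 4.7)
The plan is to prove both directions separately, using the constructions of $\FuncSty{Assign-Id}$, $\FuncSty{Assign-Weight}$, together with Lemmas~\ref{rldToApexAncestor} and~\ref{rldToParent} to anchor the restricted light depth conditions.

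For the \emph{forward direction}, assume $u$ is the parent of $v$ and split on whether $v$ is heavy or light. If $v$ is the heavy child, then $v$ is not apex (giving 1.1), $u$ is not a leaf (giving 1.2), and the recursive call $\FuncSty{Assign-Id}(v, id(u)+lw(u))$ on line inside $\FuncSty{Assign-Id}$ gives $id(v)$ as the unique integer in $[id(u)+lw(u),\, id(u)+lw(u)+2^{k(v)}-1]$ with at least $k(v)$ trailing zeros; since $k(v) = k(u) + next(u)$ by definition of $next(u)$, this integer is precisely the first one greater than or equal to $id(u)+lw(u)$ with that many trailing zeros (condition 1.3). If instead $v$ is a light child, then $v$ is apex by definition (2.1); $wc(v)\le wc(u)$ since $|T_v^\ell|\le |T_v|\le |T_u|/2 < |T_u|$, so $\gamma(v) \le \gamma(u)$ and hence $wc(v) \le wc(u)$ (2.2); the inner loop of $\FuncSty{Assign-Id}$ places $id(v)$ in the sub-interval $(id(u)+a_{wc(v)-1}(u),\, id(u)+a_{wc(v)}(u)]$ reserved for weight-class $wc(v)$ light children (2.3); and 2.4 is exactly Lemma~\ref{rldToParent} applied to $u$ and its light child $v$.

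For the \emph{reverse direction}, assume the stated conditions hold and argue that $v$ must be the identified child of $u$. In case 1, let $v^*$ be the heavy child of $u$; by the forward direction $v^*$ satisfies 1.3, and since the integer described by 1.3 is unique and ids are globally injective, $v = v^*$. In case 2, condition 2.3 together with the disjoint/nested structure of the intervals $I(\cdot)$ (noted after $\FuncSty{Assign-Id}$) forces $v$ to lie in the subtree rooted at some light child $w$ of $u$ with $wc(w)=wc(v)$. It remains to conclude $v = w$. Apply Lemma~\ref{rldToApexAncestor} to the apex ancestor $w$ of $v$: letting $k$ be the number of light edges on the $w$-to-$v$ path, $rld(v) = rld(w) + k$. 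Now use Lemma~\ref{rldToParent} to evaluate $rld(w)$ in terms of $rld(u)$: if $wc(v) = wc(u)$ then $rld(w) = rld(u)+1$, and condition 2.4 forces $k=0$; if $wc(v) < wc(u)$ then $rld(w)=0$ and condition 2.4 again forces $k=0$. Either way $v = w$, so $v$ is a light child of $u$.

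The only real subtlety is the reverse direction of case 2, where the interval containment 2.3 alone only localizes $v$ to a subtree of some light child, and we need the restricted light depth condition 2.4 to rule out deeper descendants. This is exactly what $rld$ was designed for: it is invariant along heavy paths within a weight class (Lemma~\ref{rldToApexAncestor}) and increments by one across light edges that stay in-class (Lemma~\ref{rldToParent}), so condition 2.4 pinpoints the topmost candidate. Everything else reduces to unrolling the definitions of $\FuncSty{Assign-Id}$ and $\FuncSty{Assign-Weight}$, and the bookkeeping is routine.
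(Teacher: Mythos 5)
Your proof is correct and takes essentially the same route as the paper's: the forward direction unrolls $\FuncSty{Assign-Id}$ and applies Lemma~\ref{rldToParent}, and in the reverse direction of case~2 you combine Lemmas~\ref{rldToApexAncestor} and~\ref{rldToParent} exactly as the paper does, only phrased as a direct computation of $rld(v)=rld(w)+k$ rather than as a contradiction. The one step you leave implicit---that $k=0$ forces $v=w$---deserves a sentence (and the paper states it explicitly): since $v$ is an apex node, any nonempty path from $w$ down to $v$ crosses the light edge into $v$, so $k\ge 1$ whenever $v\neq w$.
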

%\fullversion{
\begin{proof}
    First we will prove that if $v$ is a child of $u$ then either
    1.1, 1.2, 1.3 or 2.1, 2.2, 2.3, 2.4 hold. If $v$ is the heavy
    child of $u$ then clearly 1.1 and 1.2 hold. By definition $id(v)$
    is the unique number in $[id(u)+lw(u),id(u)+lw(u)+2^{k(v)}-1]$ with at
    least $k(v) = k(u) + next(u)$ trailing zeros in its binary representation
    and therefore 1.3 holds.

    Now assume that $v$ is an apex node, i.e. that 2.1 holds. Then $v$ is
    contained in $u$'s light subtree and hence, by definition, 2.2 is
    true. By the definition of \texttt{assign-id} 2.3 holds. 2.4
    follows from Lemma~\ref{rldToParent}.

    Now we will prove the converse. First assume that 1.1, 1.2, 1.3 hold.
    By 1.2, $u$ has a heavy child, $v'$. Since $k(v') = k(u) + next(u)$ we
    see that by 1.3 $id(v') = id(v)$ and hence $v = v'$ and $v$ is a child
    of $u$.

    Now assume that 2.1, 2.2, 2.3, 2.4 hold. By 2.2 and 2.3 we
    know that $v$ is contained in the light subtree of $u$. Assume for the sake
    of contradiction that $v$ is not a child of $u$ and let $v'$ be the child
    of $u$ on the path from $v$ to $u$. By 2.3 we know that $wc(v) =
    wc(v')$. Since there must by at least one light edge on the path from $v$
    to $v'$ (recall that both $v$ and $v'$ are apex nodes)
    Lemma~\ref{rldToApexAncestor} gives that $rld(v') < rld(v)$. But then 2.4
    cannot be true. Contradiction. Hence the assumption was wrong and $v$ is a
    child of $u$.
\end{proof}
%}
In order to check if $u$ is the parent of $v$ we use Lemma~\ref{checkParent}. For $v$
we need to decode:
\[
    apex(v),\ id(v),\ wc(v),\ rld(v)
\]
And for $u$ we need to decode:
\[
    leaf(u),\ wc(u),\ id(u),\ lw(u),\ k(u),\ next(u),\ a_{wc(v)-1}(u),\ a_{wc(v)}(u),\ rld(u)
\]
By the construction of the labels we can clearly do this in $O(1)$ time.

\section{Proof of weight bound}\label{sec:weight_proof}
Below follows the proof of Lemma~\ref{pwlemma}. This is the main technical proof in
this paper.

\begin{proof}[of Lemma~\ref{pwlemma}]
We prove the lemma by induction on $x$. First we prove the lemma when $x = 0$.
Consider a heavy path $p = (u_1,\ldots,u_{\abs{p}})$ in order, where $u_1$ is
closest to the root and has light height $x = 0$. Then $lw(u_i) = 1$ for all
$i = 1,\ldots,\abs{p}$ and:
\[
    pw(u_1) = \sum_{i=1}^{\abs{p}} \left ( lw(u_i) + 2^{k(u_i)} - 1 \right ) =
    \abs{p} + \sum_{i=1}^{\abs{p}} 2^{k(u_i)} - 1 =
    \abs{T_u} + \sum_{i=1}^{\abs{p}} 2^{k(u_i)} - 1
\]
Since $k'(u_i) = 0$ for $i=2,\ldots,\abs{p}$ we see that
$k(u_i) = \max \set{k'(u_1)+1-i,0}$ for any $i$. Hence:
\[
    \sum_{i=1}^{\abs{p}} \left ( 2^{k(u_i)}-1 \right ) \le
    \sum_{i=1}^{\abs{p}} 2^{k'(u_1)+1-i} \le
    \sum_{i=1}^{\infty} 2^{k'(u_1)+1-i} =
    2^{k'(u_1)+1} \le 2\abs{T_u}
\]
Hence $pw(u_1) \le 3\abs{T_u}$ which proves the lemma for $x = 0$.

Assume that the lemma holds for all nodes with light height $< x$, and
consider a heavy path $p = (u_1,\ldots,u_{\abs{p}})$ in order, where $u_1$ has
light height $x$ and is the apex node on $p$. We wish to prove that the
lemma holds for $u_1$.
For each $i = 1,\ldots,\abs{p}$
let $z_i$ be the maximum light-height of any light child of $u_i$.
Let $\alpha(u_i)$ be the sum of $pw(v)$ over all light children $v$ of $u_i$.
For any $i$ we note that $z_i \le x-1$ and so by the induction hypothesis
\[
	\alpha(u_i) \le
	3\left ( \abs{T_{u_i}^\ell} - 1 \right ) \cdot \prod_{j=1}^{z_i} \left(1 + \frac{6}{j^2}\right)
\]
We can upper bound $lw(u_i)$ in terms of $\alpha(u_i)$ by noting that we
approximate the path weights of $u_i$'s children at most $wc(u_i)$ times:
\[
    lw(u_i) =
    1 + a_{wc(u_i)} \le
    1 + \alpha(u_i) \cdot \left ( 1 + \frac{1}{(\gamma(u_i))^3} \right )^{wc(u_i)}
\]
Since $u_i$ has a child with light height $z_i$ it must have a child with a
subtree consisting of at least $2^{z_i+1}-1$ nodes by Lemma~\ref{lightHeightLemma}.
Therefore $\gamma(u_i) \ge z_i+1$.
Since $wc(u_i) = \floor{\lg \gamma(u_i)}$ we can conclude that
\[
    \left ( 1 + \frac{1}{(\gamma(u_i))^3} \right )^{wc(u_i)}
    \le
    1 + \frac{2^{wc(u_i)}}{(\gamma(u_i))^3}
    \le
    1 + \frac{2}{(\gamma(u_i))^2}
    \le
    1 + \frac{2}{(z_i+1)^2}
\]
Combining these observations gives:
\begin{align}
    \label{eq:lwbound}
    lw(u_i) \le
    3\abs{T^\ell_{u_i}} \prod_{j=1}^{z_i} \left(1 + \frac{6}{j^2}\right) \cdot
    \left ( 1 + \frac{2}{(z_i+1)^2} \right )
\end{align}
By an analysis analogous to the one in Section~\ref{sec:caterpillar} we see that:
\begin{align}
    \label{eq:kSumPowerBound}
    \sum_{i=1}^{\abs{p}} 2^{k(u_i)}-1 \le
    3 \sum_{i=1}^{\abs{p}} 2^{k'(u_i)} \le
    6 \sum_{i=1}^{\abs{p}} \frac{2^{\gamma(u_i)}}{(\gamma(u_i))^2}
\end{align}
For any $i=2,\ldots,\abs{p}$ we know that $\gamma(u_i) \ge z_i+1$ and
$2^{\gamma(u_i)} \le \abs{T_{u_i}^\ell}$. Therefore:
\[
    \sum_{i=2}^{\abs{p}} \frac{2^{\gamma(u_i)}}{(\gamma(u_i))^2}
    \le
    \sum_{i=2}^{\abs{p}} \frac{\abs{T_{u_i}^\ell}}{(z_i+1)^2}
\]
By Lemma~\ref{lightHeightLemma} $\abs{T_{u_1}} \ge 2^{x+1}-1$ and therefore
$\gamma(u_1) \ge x$. Hence
$\frac{2^{\gamma(u_1)}}{(\gamma(u_1))^2} \le \frac{\abs{T_{u_1}}}{x^2}$.
Combining these two observations allows us to conclude that
\begin{align}
    \label{eq:gammaSumPowerBound}
    \sum_{i=1}^{\abs{p}} \frac{2^{\gamma(u_i)}}{(\gamma(u_i))^2} \le
    \frac{\abs{T_{u_1}}}{x^2} +
    \sum_{i=2}^{\abs{p}} \frac{\abs{T_{u_i}^\ell}}{(z_i+1)^2} \le
    2 \sum_{i=1}^{\abs{p}} \frac{\abs{T_{u_i}^\ell}}{(z_i+1)^2}
\end{align}
When establishing the last inequality we use that
$\abs{T_{u_1}} = \sum_{i=1}^{\abs{p}} \abs{T_{u_i}^\ell}$.
Now we see that
\begin{align*}
    pw(u_1) & \le
    \sum_{i=1}^{\abs{p}} 3\abs{T^\ell_{u_i}} \prod_{j=1}^{z_i} \left(1 + \frac{6}{j^2}\right) \cdot
    \left ( 1 + \frac{2}{(z_i+1)^2} \right ) +
    \sum_{i=1}^{\abs{p}} \abs{T_{u_i}^\ell} \cdot \frac{12}{(z_i+1)^2} \\
    & \le
    \sum_{i=1}^{\abs{p}} 3\abs{T^\ell_{u_i}} \prod_{j=1}^{z_i} \left(1 + \frac{6}{j^2}\right) \cdot
    \left ( 1 + \frac{6}{(z_i+1)^2} \right ) \\
    &\le 3 \abs{T_{u_1}} \prod_{j=1}^{x} \left(1 + \frac{6}{j^2}\right)
\end{align*}
Here we used
\eqref{eq:lwbound}, \eqref{eq:kSumPowerBound}, and \eqref{eq:gammaSumPowerBound}
together with
the definition of the path weight.
\end{proof}

\section{Running time}\label{sec:time}
In this section we argue that the encoding time of the labeling scheme is
$O(n)$ and the decoding time is $O(1)$, thus finishing the proof of
Theorem~\ref{optimaladjacency}.

\subsection{Encoding time}
To bound the encoding time we will need to bound the total number of nodes with
a given weight class $k$. We will use the following notion of
\emph{contribution}:
\begin{definition}
    For an apex node $u$ we define $contrib(u) = V(T_u)$ and for a heavy child
    $u$ we define $contrib(u) = V(T_u^\ell)$. We say that a node $v\in
    contrib(u)$ is \emph{contributing} to $u$.
\end{definition}
Note that by this definition, the weight class of a node $u$ is exactly
\[
    wc(u) = \floor{\lg \lg |contrib(u)|}\ .
\]
We will need the following lemma:
\begin{lemma}\label{lem:wc_sum}
    Given a tree $T$ with $|T| = n$, the number of nodes $u$ with $wc(u) = k$
    is bounded by
    \[
        O\!\left(n\cdot\frac{2^k}{2^{2^k}}\right)\ .
    \]
\end{lemma}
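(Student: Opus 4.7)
The plan is a double counting argument on $S = \sum_{u:\,wc(u)=k} |contrib(u)|$. Since $wc(u) = k$ forces $|contrib(u)| \ge 2^{2^k}$, the number of nodes with weight class $k$ is at most $S/2^{2^k}$, so it suffices to show $S = O(n \cdot 2^k)$. Swapping the order of summation,
\[
S = \sum_{v \in V(T)} \left|\{u : wc(u) = k,\ v \in contrib(u)\}\right|,
\]
so it suffices to prove that the inner set has size $O(2^k)$ for every $v$.

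Fix $v$ and let $a_0, a_1, \ldots, a_s$ be the apex ancestors of $v$ listed from nearest to farthest; set $b_0 = v$ and $b_i = parent(a_{i-1})$ for $i \ge 1$, so that $b_i$ lies on the heavy path whose apex is $a_i$. Since $contrib(u) = T_u$ when $u$ is apex and $contrib(u) = T_u \setminus T_{heavy(u)}$ when $u$ is a heavy child, a short case analysis on the heavy path containing $u$ shows that $v \in contrib(u)$ iff $u = a_i$ for some $i$, or $u = b_i$ for some $i$ with $b_i \ne a_i$ (i.e., $b_i$ is a heavy child). In total the set has size at most $2(s+1)$, but we need the sharper per-weight-class count.

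The main technical step is to exploit geometric growth of the subtree sizes controlling the weight class. Because $a_{i-1}$ is a light child of $b_i$, the heavy-light property gives $|T_{b_i}| \ge |T_{heavy(b_i)}| + |T_{a_{i-1}}| \ge 2|T_{a_{i-1}}|$, and $T_{b_i} \subseteq T_{a_i}$ then yields $|T_{a_i}| \ge 2|T_{a_{i-1}}|$. Weight class $k$ corresponds to $|contrib(u)| \in [2^{2^k}, 2^{2^{k+1}})$, a range of logarithmic width $2^k$, which accommodates at most $2^k + 1$ doubling values; hence $O(2^k)$ of the apex ancestors satisfy $wc(a_i) = k$. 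For the $b_i$'s I would establish the sandwich
\[
|T_{a_{i-1}}| \le |T_{b_i}^\ell| \le |T_{a_i}|,
\]
the left inclusion because $a_{i-1}$ is a light child of $b_i$ (so $T_{a_{i-1}} \subseteq T_{b_i}^\ell$) and the right because $T_{b_i} \subseteq T_{a_i}$. Combined with the doubling of the $|T_{a_i}|$, this gives $|T_{b_{i+1}}^\ell| \ge |T_{a_i}| \ge 2|T_{a_{i-1}}| \ge 2|T_{b_{i-1}}^\ell|$, so at most $O(2^k)$ of the $b_i$'s land in weight class $k$ as well.

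Putting the two bounds together, every $v$ lies in $O(2^k)$ sets $contrib(u)$ with $wc(u) = k$, so $S = O(n \cdot 2^k)$ and dividing by $2^{2^k}$ yields the lemma. The main obstacle is the structural identification of $\{u : v \in contrib(u)\}$: because $contrib(u)$ is defined differently for apex and heavy-child nodes, it is easy to overlook the heavy-child $b_i$'s, which contribute precisely through the inclusion $T_{a_{i-1}} \subseteq T_{b_i}^\ell$. Once this characterization is in hand, the geometric-growth estimates are direct consequences of the heavy-light decomposition.
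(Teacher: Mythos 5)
Your proof is correct and follows essentially the same route as the paper's: double-count the pairs $(u,v)$ with $v\in contrib(u)$ and $wc(u)=k$, characterize the set of nodes a fixed $v$ contributes to as its apex ancestors plus the deepest ancestor on each crossing heavy path, use the at-least-doubling of subtree sizes along the apex chain to get $O(2^k)$ such nodes per weight class, and divide by the lower bound $|contrib(u)|\ge 2^{2^k}$. If anything, your sandwich $|T_{a_{i-1}}|\le|T_{b_i}^\ell|\le|T_{a_i}|$ handles the heavy-child contributions a bit more explicitly than the paper's ``at least half are apex nodes'' shortcut, but the argument is the same.
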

\begin{proof}
    Consider any node $u\in T$. We will first bound the number of nodes $v$
    with $wc(v) = k$ such that $u\in contrib(v)$. Observe that a node $u$
    contributes to exactly all apex nodes, which are ancestors of $u$ as well
    as the heavy child $v$ of maximum depth for each heavy path $p$, such that
    $v$ is an ancestor of $u$ (note that such $v$ might not exist for a heavy
    path $p$). Thus at least half the nodes that $u$ contributes to are apex nodes.

    Let $w_1$ be the apex node in $T$ of minimum depth such that $w_1$ is an
    ancestor of $u$ and $wc(w_1) = k$. Then $|contrib(w_1)| < 2^{2^{k+1}}$. Let
    $w_i$ be the first apex node on the path from $w_{i-1}$ to $u$ (excluding
    $w_{i-1}$ itself). Then for all $i$ such that $w_i$ is well defined we have
    \[
        |contrib(w_i)| \le |contrib(w_{i-1})|/2\ ,
    \]
    and thus $|contrib(w_{2^k})| < 2^{2^k}$ implying that $wc(w_{2^k}) < k$.
    Thus $u$ can contribute to at most $2^{k+1} + 1$ nodes with weight class
    $k$.

    It follows that the total number of nodes contributing to nodes of weight
    class $k$ is bounded by $n\cdot (2^{k+1}+1)$. Since each node of weight
    class $k$ has at least $2^{2^k}$ nodes contributing to it, we can bound the
    total number of nodes with weight class $k$ by
    \[
        n\cdot\frac{2^{k+1}+1}{2^{2^k}} =
        O\!\left(n\cdot\frac{2^k}{2^{2^k}}\right)\ .
    \]
\end{proof}
The proof of Lemma~\ref{lem:wc_sum} is illustrated in Figure~\ref{fig:num_wc}. The figure
illustrates how each node $u$ contributes to all apex nodes on the path from
$u$ to the root, and how the number of contributing nodes doubles per apex node
on this path.
\begin{figure}[htbp]
    \centering
    \includegraphics[width=.6\textwidth]{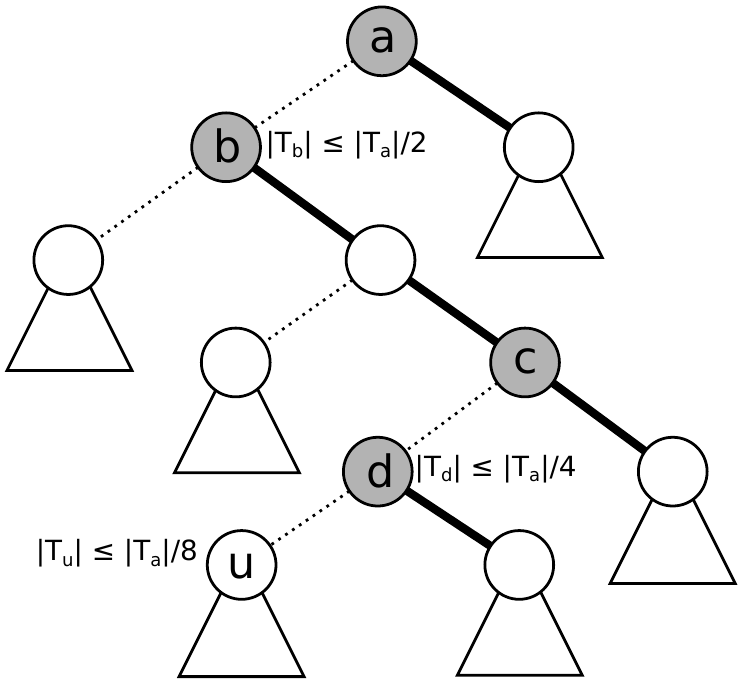}
    \caption{Illustration of Lemma~\ref{lem:wc_sum}. The grey nodes are the ones
        that $u$ are contributing to. For each grey apex node on the path from
        $u$ to the root, the number of contributing nodes grows by at least a
    factor of $2$.}
    \label{fig:num_wc}
\end{figure}

We are now ready to bound the encoding time.
First recall that we are using the word-RAM model with word size $c\log
n$ for some sufficiently large constant $c$ such that the entire label
$\ell(u)$ fits in one word. We are thus able to create the Elias $\gamma$ code
of $k(u)$, $wc(u)$, $rld(u)$, and $M(u)$ in $O(1)$ time for each node $u$ using
standard word operations.

We may assume that the children of each node is sorted by subtree size.
Otherwise we can ensure this using e.g. bucket sort in $O(n)$ time.

Since all components of $aux(u)$ other than $k(u)$ can be calculated using
a simple DFS-traversal in $O(n)$ time, we see that the total encoding time is
dominated by the running time of Algorithm~\ref{alg:assign_weight},
Algorithm~\ref{alg:assign_id}, and the time to construct $table(u)$ from the
$a_i(u)$s. For Algorithm~\ref{alg:assign_id} we first observe that
line~\ref{line:skip_0s} can be done in $O(1)$ time using the following
approach:
\begin{enumerate}
    \item Let $a$ be the integer resulting from setting the last $k(u)$
        bits of the binary representation of $s$ to $0$.
    \item If $a = s$, then return $s$.
    \item Otherwise return $a + 2^{k(u)}$
\end{enumerate}
Each of the three steps can be done in $O(1)$ time using word operations.
The rest of Algorithm~\ref{alg:assign_id} is a DFS-traversal, which runs in $O(n)$
time total. For the construction of $table(u)$, observe that all
of $table(u)$ fits in a word, so we can calculate each $r_i(u)$ in
$O(1)$ time. The total construction time over all nodes of $T$ is thus
bounded by:
\begin{align}\label{eq:bound_wc_sum}
    \begin{split}
    \sum_{u\in T} O(wc(u))
    &= O\!\left(\sum_{k = 0}^{\ceil{\lg \lg n}} k\cdot|\{w\in T\mid wc(w) =
    k\}|\right) \\
    &\le O\!\left(\sum_{k=0}^{\ceil{\lg\lg n}} kn\cdot \frac{2^k}{2^{2^k}}
    \right) \\
    &\le O\!\left(n\cdot\sum_{k=0}^\infty \frac{k\cdot2^k}{2^{2^k}}\right) \\
    &= O(n)\ .
    \end{split}
\end{align}
Here, the second line follows by Lemma~\ref{lem:wc_sum}. For
Algorithm~\ref{alg:assign_weight} we see that the total time spent in the loop
of line~\ref{line:for_weight} to line~\ref{line:approx} for all nodes $u\in T$
is bounded by
\[
    \sum_{u\in T} O(|children(u)| + wc(u))\ .
\]
By \eqref{eq:bound_wc_sum} this is $O(n)$. The rest of
Algorithm~\ref{alg:assign_weight} spends time proportional to the length of the
heavy path the function has been called with, which sums to $O(n)$ over all
heavy paths. Note that line~\ref{line:approx} is calculated in $O(1)$ time
using Lemma~\ref{lem:approximation}.

By summing up the three different parts we see that the total encoding time
of the labeling scheme is $O(n)$.

\subsection{Decoding time}
Using the conditions of Lemma~\ref{checkParent} we will bound the decoding time
of the labeling scheme:

Recall that we are able to decode each of $k(u)$,
$wc(u)$, $rld(u)$, $apex(u)$, $leaf(u)$, $next(u)$, and $M(u)$ in $O(1)$
time. Doing this we also locate the beginning of $a_1(u)$ in the bit string
(label). Let this bit position be denoted by $x$.

Knowing $x$, $M(u)$, and $wc(v)$ we can read the $wc(v)-1$st and
$wc(v)$th entries of $table(u)$ in $O(1)$ time, since these are located
exactly at bit positions $x + M(u)\cdot (wc(v)-2)$ and $x + M(u)\cdot
(wc(v) - 1)$. If $wc(v) = 1$ we know that $a_0(u) = 0$. Similarly we know
that $wlsb(id(u),k(u))$ begins at bit position $x + M(u)\cdot (wc(u) - 1)$ and
consists of the remaining bits. We can do the same for $v$, thus decoding
each relevant component of $\ell(u)$ and $\ell(v)$ can be done in $O(1)$
time.

The conditions 1.1, 1.2 and 2.1-4 can now be checked in $O(1)$ by using the
corresponding values. For condition 1.3 we need to be able to find the
smallest integer greater than $id(u) + lw(u)$ with at least $k(u) +
next(u)$ trailing zeroes. Observe that $lw(u) = 1 + a_{wc(u)}(u)$ can be
obtained in $O(1)$ time from $table(u)$ in the same manner as
$a_{wc(v)}(u)$ was. Finding the smallest such integer can now be done in
$O(1)$ time be using the same procedure as in the previous section.

This finishes the proof of Theorem~\ref{optimaladjacency}.

\newpage

\bibliographystyle{plain}
\bibliography{treeadjacency}

\makeatletter
\def\runninghead{\hrulefill\quad APPENDIX\quad\hrulefill}
\def\ps@headings{
	\def\@oddhead{\footnotesize\rm\hfill\runninghead\hfill}}
\def\@evenhead{\@oddhead}
\def\@oddfoot{\rm\hfill\thepage\hfill}\def\@evenfoot{\@oddfoot}
\makeatother

\newpage
\setlength{\headsep}{15pt} \pagestyle{headings}

\appendix

\section{Adjacency labeling for trees explicitly listed as an open problem} \label{whatisthat}

\newcommand{\q}[1]{``#1''}

Let $T_n$ denote the family of trees with $n$ nodes. In the quotes below ``universal graph'' is ``induced universal''.

Chung~\cite[emphasized on page 452-453]{Chung90}
\q{What is the correct order of magnitude for $g_v(T_n)$? [...] It
would be of particular interest to sharpen the bounds for $g_v(T_n)$ [...]}

In~\cite[page 465]{fraigniaudkorman2}
\q{Proving or disproving the existence of a universal graph with a linear number
of nodes for the class of $n$-node trees is a central open problem in the design
of informative labeling schemes.}

In~\cite[page 592]{gavoille2007shorter}
\q{[...] prove an optimal bound for trees (up to an additive constant) which is still open.}

In~\cite[page 143-144]{bonichon2006short}
\q{leaving open the question of whether trees enjoy a labeling scheme with
$\log n+\Oh(1)$ bit labels [...] In particular, for adjacency queries in trees,
the current lower bound is $\log n$ and the upper bound is $\log n + \Oh(\log^* n)$}

In~\cite[page 42]{cyriltalk}
\q{Induced-universal graph for n-node trees of $\Oh(n)$ size?}

In~\cite{KNR92}
\q{The question of matching upper and lower bounds for the sizes of the universal
graphs for these families still remain open.}
In this paper trees and graphs with bounded arboricity are two of the main families being considered.

In~\cite[page 132]{Fraigniaud2009randomized} \q{Proving or disproving
the existence of an adjacency labeling scheme for trees
using labels of size $\log n + O(1)$ remains a central open problem
in the design of informative labeling schemes.}

\end{document}